\renewcommand{\subsection}{%
  \@startsection{subsection}
    {2}
    {\z@}
    {-21dd plus-8pt minus-4pt}
    {10.5dd}
    {\normalsize\bfseries\boldmath}%
}
\newcolumntype{C}[1]{>{\Centering}m{#1}}
 \journalname{}
\begin{document}
	
%\title{$(\sigma,\delta)$-cyclic codes over a non-chain ring and their DNA application}
\title{Construction of $(\sigma,\delta)$-cyclic codes over a non-chain ring and their applications in DNA codes}
	
	%\titlerunning{Short form of title}        % if too long for running head
	
\author{Ashutosh Singh, Priyanka Sharma and Om Prakash$^*$}
\authorrunning{Singh et al.}

\institute{ \at $^1$
              Department of Mathematics\\
              Indian Institute of Technology Patna, Patna 801 106, India \\
              \email{ashutosh$\_$1921ma05@iitp.ac.in, priyanka$\_$2121ma13@iitp.ac.in, om@iitp.ac.in(*corresponding author)}
             % \and \at $^2$ I2M, (CNRS, Aix-Marseille University, Centrale Marseille) Marseille, France\\ \email{sole@enst.fr}
}
\date{Received: date / Accepted: date}

	% The correct dates will be entered by the editor
	\maketitle
	
	\begin{abstract}
	For a prime $p$ and a positive integer $m$, let $\mathbb{F}_{p^m}$ be the finite field of characteristic $p$, and  $\mathfrak{R}_l:=\mathbb{F}_{p^m}[v]/\langle v^l-v\rangle$ be a non-chain ring. In this paper, we study the $(\sigma,\delta)$-cyclic codes over $\mathfrak{R}_l$. Further, we study the application of these codes in finding DNA codes. Towards this, we first define a Gray map to find classical codes over $\mathbb{F}_{p^m}$ using codes over the ring $\mathfrak{R}_l$. Later, we find the conditions for a code to be reversible and a DNA code using $(\sigma, \delta)$-cyclic code. Finally, this algebraic method provides many classical and DNA codes of better parameters.
	\end{abstract}
	\keywords{$(\theta,\delta)$-cyclic code \and Reversible code \and Gray map \and DNA code.}
	%\subclass{94B05 \and 94B15 \and 94B60.}

\maketitle

\section{Introduction}

DNA computing is an evolving area that uses the principles of molecular biology to carry out computations. After the work of Adleman \cite{Adleman}, many researchers started working in this direction. The fundamental aspect of these studies is to perform parallel computations and also store large amounts of data efficiently for a longer time. These studies have vast applications in computing, data storage, the development of new medical technologies, cryptography, and several others. However, the computations in DNA molecules are also prone to errors and they can arise due to various reasons, such as errors in DNA synthesis, errors in DNA hybridization, and errors in DNA sequencing \cite{MKGupta}. To mitigate these errors, various techniques have been developed, such as error-correcting codes, self-assembly, and error-tolerant algorithms. In the last two decades, the construction of error-correcting codes using the algebraic method has attracted researchers. \\
Algebraic coding theory \cite{Mac} is a mathematical discipline that pertains to the design and analysis of error-correcting codes utilized to ensure the integrity of transmitted data through communication channels that are susceptible to noise and interference. The theory uses algebraic structures, including groups, rings, and fields, to represent codes as subsets of larger algebraic structures, such as finite vector spaces \cite{CodeRing,ore}. Recently, following the work of Hammon et al. \cite{HammonsR}, many authors considered commutative and noncommutative rings for the construction of codes. In \cite{mostafanasab,non-chain siap,abualrub,bayram,bennenni,guenda}, authors have studied cyclic codes over some rings and further extended it to find a condition for DNA codes. Furthermore, in the quest to find good codes, researchers have studied skew cyclic codes for classical and DNA codes over different finite rings \cite{Boucher07,D09,leroy,gursoy,Ashu}. These studies also helped in the construction of self-dual and quantum codes \cite{Verma}. Recently, Patel and Prakash \cite{shikha2} studied codes over rings with $\theta$-derivation.
%and Patel et. al. matrix ring such as $M_4(\mathbb{F}_2+u\mathbb{F}_2)$ with $u^2=0$ \cite{shikha}.
However, these rings have some limitations as the length of the Gray image has only certain combinations, resulting in a limited scope of good codes. Hence, this work extends the study to construct the codes over more general rings. These codes have flexible lengths and produce optimal or better codes compared to the codes over other commutative and noncommutative rings \cite{Grassl,shikha2,shikha}. We further extend the study to the application of finding DNA codes and several MDS, optimal and better classical codes using Sagemath \cite{sage} and Magma \cite{magma} computational software.% To find classical MDS, optimal and better codes using Sagemath\cite{sage} and Magma \cite{magma}. Here, we also provide explicit examples to illustrate our computations.

This paper is organized as follows: Section 2 discusses the structure of the ring $\mathfrak{R}_l$ and presents some basic definitions and results. Section 3 provides a necessary and sufficient condition for $(\sigma,\delta)$-cyclic codes over the ring $\mathfrak{R}_l$. In Section 4, we present results on reversibility and identify the condition for $(\theta,\delta)$-cyclic codes to be reversible, while Section 5 focuses on the results of DNA codes obtained from $(\theta,\delta)$-cyclic codes. In Section 6, we offer several examples of classical codes and DNA codes to support our study, along with new and improved classical codes and DNA codes listed in Tables [\ref{Tbl1}, \ref{Tbl2}, \ref{Tbl3}, \ref{DNA codes}]. Section 7 concludes the work.

\section{Preliminary}\label{sec2}

Let $\mathbb{F}_{p^m}$ be the field of characteristic $p$ where $p$ is a prime and $m$ is a positive integer. Consider the ring $\mathfrak{R}_l=\mathbb{F}_{p^m}[v]/\langle v^l-v \rangle$, where $l \geq 2$. Thus $\mathfrak{R}_l$ is a non-chain ring, and its elements are of the form $a_0+va_1+\cdots+v^{l-1}a_{l-1}$ where $a_i$'s are in $\mathbb{F}_{p^m}$.

Given a primitive element $\omega$ of the field $\mathbb{F}_{p^m}$ and $\varsigma=\omega^{\frac{p^m-1}{l-1}}$ where $(l-1)|(p^m-1)$, consider
	\begin{align*} \gamma_1&=1-v^{l-1},\\
	\gamma_2&=\frac{1}{l-1}(v+v^2+\cdots+v^{l-2}+v^{l-1}),\\
	\gamma_3&=\frac{1}{l-1}(\varsigma v+\varsigma^2 v^2+\cdots+\varsigma^{l-2} v^{l-2}+v^{l-1}),\\
	\gamma_4&=\frac{1}{l-1}(\varsigma^2 v+ (\varsigma^2) ^2 v^2+\cdots+ (\varsigma^2)^{l-2} v^{l-2}+v^{l-1}),\\
	&\vdots \\
	\gamma_l&=\frac{1}{l-1}(\varsigma^{l-2} v+ (\varsigma^{l-2})^2 v^2+\cdots+ (\varsigma^{l-2})^{l-2} v^{l-2}+v^{l-1}). \end{align*}
 Then,  we get
 $$\gamma_i \gamma_j=\begin{cases}
\gamma_i, & \text{if $i=j$}\\
0, & \text{if $i \neq j$}
\end{cases} \text{ and } \sum_{i=1}^l \gamma_i=1 \text{ for all $1\le i,j\le l$};
$$
and the set $\{\gamma_i:1\leq i\leq l\}$ contains pairwise orthogonal idempotent elements.

 Therefore, by Chinese remainder theorem, we can decompose the ring $\mathfrak{R}_l$ as follows $$\mathfrak{R}_l=\bigoplus_{i=1}^{l}\gamma_i\mathfrak{R}_l\cong \bigoplus_{i=1}^{l}\gamma_i\mathbb{F}_{p^m}.$$
Clearly, $\gamma_i\mathfrak{R}_l\cong \gamma_i \mathbb{F}_{p^m}$ and any element $r$ of the ring $\mathfrak{R}_l$ can be uniquely written as $r=\sum_{i=1}^l\gamma_ir_i$ where $r_i\in \mathbb{F}_{p^m}$ for all $i$.
\begin{definition}
    A linear code $\mathcal{C}$ over a ring $\mathcal{R}$ of length $n$ is defined as an $\mathcal{R}$-submodule of $\mathcal{R}^n$ and elements of $\mathcal{C}$ are called codewords.
\end{definition}
The Hamming distance of two codewords $x$ and $y$ is defined as the minimum number of substitutions required to make $y$ from $x$. The Hamming distance of a code is defined as the minimum of the distances between two distinct codewords.
%The edit distance of two codewords $x$ and $y$ is the number of substitutions, insertions and deletions required to make $y$ from $x$.
\begin{definition}
    Let $\theta$ be an automorphism of a finite ring $\mathcal{R}$. Then a map $\delta:\mathcal{R}\to \mathcal{R}$ is called a $\theta$-derivation if the following conditions hold:
    \begin{enumerate}
        \item $\delta(r_1+r_2)=\delta(r_1)+\delta(r_2)$ and
\item $\delta(r_1r_2)=\delta(r_1)r_2+\theta(r_1)\delta(r_2)~~~~~~ \text{ for all $r_1,r_2\in \mathcal{R}$}.$
    \end{enumerate}
\end{definition}

Consider the set of all polynomials over ring $\mathfrak{R}_l$ with an automorphism $\sigma$ and an inner derivation $\delta$. Then this set forms a noncommutative ring under the usual addition of polynomials, and multiplication is defined with respect to $xr=\sigma(r)x+\delta(r)$. This ring is known as a skew polynomial ring and is denoted by $\mathfrak{R}_l[x;\sigma,\delta]$. As $\mathbb{F}_{p^m}$ is a division ring, $\mathbb{F}_{p^m}[x;\theta, \delta]$ forms a right (resp. left) Euclidean ring. Throughout this paper, we take $\theta$ as the Frobenius automorphism of $\mathbb{F}_{p^m}$, and we extend this automorphism to find an automorphism $\sigma$ of the ring $\mathfrak{R}_l$. It is easy to verify that the order of $\theta$ and $\sigma$ is $2$, i.e., $\theta^2=Id_{\mathbb{F}_{p^m}}$ and $\sigma^2=Id_{\mathfrak{R}_l}$.\\
Now, we give a few results related to a derivation map which are helpful throughout the paper.
\begin{lemma}
    Let $\delta:\mathcal{R}\to \mathcal{R}$ be a map defined by $\delta(r)=\alpha(\theta(r)-r)$
where $\alpha$ is a non-zero element in $\mathcal{R}$. Then the map $\delta$ is a $\theta$-derivation of $\mathcal{R}$.
\end{lemma}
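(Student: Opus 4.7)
The plan is to verify directly the two defining axioms of a $\theta$-derivation from the explicit formula $\delta(r)=\alpha(\theta(r)-r)$, using only that $\theta$ is a ring automorphism and that the relevant ring (ultimately $\mathfrak{R}_l$) is commutative.

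First I would handle additivity, which is essentially immediate. Since $\theta$ is an automorphism it is in particular additive, so
\begin{equation*}
\delta(r_1+r_2)=\alpha\bigl(\theta(r_1+r_2)-(r_1+r_2)\bigr)=\alpha\bigl(\theta(r_1)-r_1\bigr)+\alpha\bigl(\theta(r_2)-r_2\bigr)=\delta(r_1)+\delta(r_2).
\end{equation*}

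Next I would verify the twisted Leibniz rule by expanding both sides. On the left,
\begin{equation*}
\delta(r_1 r_2)=\alpha\bigl(\theta(r_1)\theta(r_2)-r_1 r_2\bigr),
\end{equation*}
and on the right,
\begin{equation*}
\delta(r_1)r_2+\theta(r_1)\delta(r_2)=\alpha\bigl(\theta(r_1)-r_1\bigr)r_2+\theta(r_1)\alpha\bigl(\theta(r_2)-r_2\bigr).
\end{equation*}
Using commutativity to move $\alpha$ past $\theta(r_1)$ in the last summand, this becomes $\alpha\theta(r_1)r_2-\alpha r_1 r_2+\alpha\theta(r_1)\theta(r_2)-\alpha\theta(r_1)r_2$, and the two middle cross terms $\pm\alpha\theta(r_1)r_2$ telescope, leaving precisely $\alpha(\theta(r_1)\theta(r_2)-r_1r_2)$, matching the left-hand side.

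The only potential obstacle is that the identity $\delta(r_1)r_2+\theta(r_1)\delta(r_2)=\delta(r_1r_2)$ with this specific $\delta$ genuinely relies on being able to commute $\alpha$ with $\theta(r_1)$; in a general noncommutative ring the formula would fail unless $\alpha$ lies in a suitable centralizer. Since this lemma is stated and used in the setting of the commutative ring $\mathfrak{R}_l=\mathbb{F}_{p^m}[v]/\langle v^l-v\rangle$, this commutation is automatic and no further hypothesis is required. I would therefore conclude the proof by noting that both conditions in the definition of a $\theta$-derivation are satisfied, so $\delta$ is a $\theta$-derivation of $\mathcal{R}$, as claimed.
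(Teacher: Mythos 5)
Your proof is correct and follows essentially the same route as the paper's: direct verification of additivity plus the twisted Leibniz rule via cancellation of the cross term $\alpha\theta(r_1)r_2$ (the paper inserts $-\theta(r_1)r_2+\theta(r_1)r_2$ on the left-hand side, you telescope it on the right-hand side, which is the same algebraic step). Your explicit remark that one must commute $\alpha$ past $\theta(r_1)$ — harmless here since $\mathfrak{R}_l$ is commutative — is a point the paper uses silently, so it is a fair observation but not a difference in method.
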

\begin{proof}
 Let $r_1,r_2\in \mathcal{R}$. Then for a nonzero $\alpha$ in $\mathcal{R}$, we get
\begin{align*}
    \delta(r_1+r_2)&=\alpha(\theta(r_1+r_2)-(r_1+r_2))\\
    &=\alpha(\theta(r_1)-r_1)+\alpha(\theta(r_2)-r_2)\\
    &=\delta(r_1)+\delta(r_2)
\end{align*}
and
\begin{align*}
    \delta(r_1r_2)&=\alpha(\theta(r_1r_2)-r_1r_2)\\
    &=\alpha(\theta(r_1r_2)-\theta(r_1)r_2+ \theta(r_1)r_2-r_1r_2)\\
    &=\alpha(\theta(r_1)\theta(r_2)-\theta(r_1)r_2)+\alpha(\theta(r_1)r_2-r_1r_2)\\
    &=\alpha\theta(r_1)(\theta(r_2)-r_2)+\alpha(\theta(r_1)-r_1)r_2\\
    &=\delta(r_1)r_2+\theta(r_1)\delta(r_2).
\end{align*}
\end{proof}

\begin{remark}
    For the prime field $\mathbb{F}_p$ of the field $\mathbb{F}_{p^m}$, any derivation is only zero derivation. Interestingly, with inner derivation and the Frobenius automorphism over $\mathbb{F}_4$, we get four possible noncommutative rings.
\end{remark}
\begin{lemma}
    Let $\delta$ be the inner derivation map defined above and $\theta$ fixes $\alpha$. Then for any $r\in \mathcal{R}$, we get $\delta(\theta(r))=\theta(\delta(r))$.
\end{lemma}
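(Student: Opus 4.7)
The plan is to verify the identity $\delta(\theta(r)) = \theta(\delta(r))$ by computing both sides directly using the explicit formula $\delta(r) = \alpha(\theta(r) - r)$ and the hypothesis $\theta(\alpha) = \alpha$. This is essentially a one-line calculation once the substitution is made, so the main task is organizing it cleanly rather than overcoming a real obstacle.

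First I would substitute $\theta(r)$ into the defining formula for $\delta$ to get
\[
\delta(\theta(r)) = \alpha\bigl(\theta(\theta(r)) - \theta(r)\bigr) = \alpha\bigl(\theta^2(r) - \theta(r)\bigr).
\]
Next I would apply $\theta$ to $\delta(r) = \alpha(\theta(r) - r)$. Since $\theta$ is a ring automorphism, it preserves both the product and the difference, and by the standing assumption $\theta(\alpha) = \alpha$, so
\[
\theta(\delta(r)) = \theta(\alpha)\bigl(\theta(\theta(r)) - \theta(r)\bigr) = \alpha\bigl(\theta^2(r) - \theta(r)\bigr).
\]
Comparing the two displays yields $\delta(\theta(r)) = \theta(\delta(r))$, which is the desired equality.

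The only subtlety worth flagging is that the argument uses $\theta$ being a homomorphism (to split the image of a product and a difference) together with the fixity hypothesis $\theta(\alpha) = \alpha$; without the latter, an extra factor $\theta(\alpha)$ would appear and the identity would fail in general. No use of the specific order-two property of $\theta$, nor of the structure of $\mathfrak{R}_l$, is required, so the lemma holds in the stated generality.
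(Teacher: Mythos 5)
Your proof is correct and follows essentially the same approach as the paper: a direct computation of both sides from the definition $\delta(r)=\alpha(\theta(r)-r)$, using that $\theta$ is a ring homomorphism and $\theta(\alpha)=\alpha$. The only cosmetic difference is that the paper writes $r=a+vb$ and expands coordinatewise with the Frobenius, whereas you keep the computation abstract, which is slightly cleaner and makes clear the result needs nothing beyond the fixity of $\alpha$.
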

\begin{proof} Let $r=a+vb$. Then
\begin{align*}
    \delta(\theta(r))&=\delta(a^p+vb^p)\\
    &=\alpha(a^{p^2}+vb^{p^2}-a^p-vb^p),
\end{align*}
and
\begin{align*}
    \theta(\delta(r))&=\theta(\alpha(\theta(r)-r))\\
    &=\theta(\alpha(a^p+vb^p-a-vb))\\
    &=\theta(\alpha)\theta(a^p+vb^p-a-vb)\\
    &=\alpha(a^{p^2}+vb^{p^2}-a^p-vb^p).
\end{align*}
Therefore, $\delta(\theta(r))=\theta(\delta(r))$ holds when $\theta(\alpha)=\alpha$.
\end{proof}
%  \begin{table}[ht]
 	
% \begin{center}
% % 	\setlength{\extrarowheight}{}
% \scalebox{1.5}{	\begin{tabular}{|c	c	c	c|	c	c	c	c|}

%  		\hline

%  		$0$ &AA &$v$ &CT \\
%      $tv$ & GC & $t^2v$ & TG\\
%  		$1$ & TC & $v + 1$ & GG  \\
%        $tv + 1$ & CA & $t^2v + 1$ & AT \\
%  		$t$  &CG & $v + t$ & AC \\
%    $tv + t$ & TT & $t^2v + t$ & GA \\
%  		$t^2$ & GT & $v + t^2$ & TA  \\ $tv + t^2$ & AG &$t^2v + t^2$ & CC \\

%  		\hline
%  	\end{tabular}}
%  	\end{center}
%  	\caption{Codons correspondence to ring($\mathfrak{R}_2$) elements using Gray map}
%  	\label{codons}
%  \end{table}

\begin{lemma} Let $\delta$ be a  $\theta$-derivation of a ring $\mathcal{R}$, which commutes with $\theta$. Then
\begin{align*}
      x^na = ~&  {}^nC_0 \theta^n(a) x^n + {}^nC_1(\theta^{n-1}\delta)(a)x^{n-1} +{}^nC_2(\theta^{n-2}\delta^{2})(a)x^{n-2}  + {}^nC_3(\theta^{n-3}\delta^{3})(a)x^{n-3} \\ & + \cdots  + {}^nC_{n-1}(\theta\delta^{n-1})(a)x + {}^nC_n\delta^n(a),
\end{align*}
where multiplication is defined as the composition of mappings.
\end{lemma}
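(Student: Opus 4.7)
The plan is to prove this by induction on $n$, with the defining relation $xa = \sigma(a)x + \delta(a)$ (written here with $\theta$ since the lemma is phrased over a generic ring) serving as the base case $n=1$. This matches the claimed expansion because $\binom{1}{0}\theta(a)x + \binom{1}{1}\delta(a) = \theta(a)x + \delta(a)$.

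For the inductive step, assuming the formula holds at level $n$, I would compute $x^{n+1}a = x \cdot (x^n a)$ by distributing $x$ across the inductive expansion. For each summand $\binom{n}{k}(\theta^{n-k}\delta^k)(a)\,x^{n-k}$, I apply the rule $x \cdot r = \theta(r)x + \delta(r)$ with $r = (\theta^{n-k}\delta^k)(a)$, which produces two pieces: one contributing to $x^{n-k+1}$ and one contributing to $x^{n-k}$. Here the hypothesis that $\delta$ commutes with $\theta$ is essential, since it lets me rewrite $\delta \circ \theta^{n-k}\circ \delta^k = \theta^{n-k}\circ \delta^{k+1}$, so that the derivative piece becomes $(\theta^{n-k}\delta^{k+1})(a)\,x^{n-k}$ in the clean form demanded by the statement; without commutativity the powers of $\theta$ and $\delta$ would not collapse so neatly.

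After this rewriting, I would re-index the two resulting sums so they both range over the coefficient of $x^{n+1-j}$ for $j=0,\dots,n+1$. Collecting terms, the coefficient of $x^{n+1-j}$ becomes $\binom{n}{j}(\theta^{n+1-j}\delta^j)(a) + \binom{n}{j-1}(\theta^{n+1-j}\delta^j)(a)$, and Pascal's identity $\binom{n}{j}+\binom{n}{j-1}=\binom{n+1}{j}$ finishes the induction. The boundary terms $j=0$ and $j=n+1$ match automatically because $\binom{n}{-1}=\binom{n}{n+1}=0$.

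The only real obstacle is bookkeeping: making sure the commutation $\delta\theta = \theta\delta$ is invoked at every step (it has to be applied $k$ times to slide $\delta$ past $\theta^{n-k}$, though one application suffices if I first rewrite $\theta^{n-k}\delta^k = \delta^k\theta^{n-k}$ by repeated commutation in the inductive hypothesis), and that the index shift in the two resulting sums is performed consistently so that Pascal's rule applies cleanly. Once these accounting points are settled, the induction closes immediately.
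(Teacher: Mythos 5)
Your proof is correct and follows exactly the route the paper intends: the paper's proof is simply the statement ``induction on $n$,'' and your base case $xa=\theta(a)x+\delta(a)$, the distribution of $x$ through the inductive expansion using $\delta\theta=\theta\delta$ to collapse $\delta\circ\theta^{n-k}\circ\delta^{k}$ into $\theta^{n-k}\circ\delta^{k+1}$, and Pascal's identity supply precisely the omitted details. (Only a trivial bookkeeping slip: sliding $\delta$ past $\theta^{n-k}$ takes $n-k$ commutations, not $k$; this does not affect the argument.)
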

\begin{proof}
    %The proof is straightforward.
    The proof follows from induction on $n.$
\end{proof}

\begin{corollary}
    Consider skew polynomial ring $\mathbb{F}_4[x;\theta,\delta]$. Then for any $a\in \mathbb{F}_4$, we have
    $$ x^na= \begin{cases}
        (\theta(a)x+\delta(a))x^{n-1},&\textit{if n is odd},\\
        ax^n,& \textit{if n is even.}    \end{cases} $$
\end{corollary}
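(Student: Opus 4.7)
The plan is to invoke the preceding lemma, reduce the long sum using the very special arithmetic of $\mathbb{F}_4$ (namely characteristic two and $\theta^2=Id$), and then split on the parity of $n$.

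First I would write down the conclusion of the preceding lemma for the particular ring $\mathbb{F}_4[x;\theta,\delta]$:
\begin{equation*}
x^n a \;=\; \sum_{k=0}^{n}\binom{n}{k}\bigl(\theta^{n-k}\delta^{k}\bigr)(a)\,x^{n-k}.
\end{equation*}
Because $\mathbb{F}_4$ has characteristic $2$, the binomial coefficients collapse mod $2$, and $\theta^{2}=Id$ reduces every $\theta^{j}$ to either $\theta$ or $Id$ depending on the parity of $j$.

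The key step is then to show that $\delta^{2}(a)=0$ for all $a\in\mathbb{F}_4$. This is where the inner form $\delta(r)=\alpha(\theta(r)-r)$ and the hypothesis $\theta(\alpha)=\alpha$ (needed earlier so that $\delta$ commutes with $\theta$, and hence so that the preceding lemma applies) come into play. A direct calculation gives
\begin{equation*}
\delta^{2}(a) \;=\; \alpha\bigl(\theta(\alpha(\theta(a)-a))-\alpha(\theta(a)-a)\bigr)
\;=\; \bigl(\alpha\theta(\alpha)+\alpha^{2}\bigr)\bigl(a-\theta(a)\bigr),
\end{equation*}
and with $\theta(\alpha)=\alpha$ the coefficient becomes $2\alpha^{2}=0$ in characteristic two. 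Consequently every summand with $k\geq 2$ vanishes, leaving only the $k=0$ and $k=1$ terms:
\begin{equation*}
x^n a \;=\; \theta^{n}(a)\,x^{n} + n\,\bigl(\theta^{n-1}\delta\bigr)(a)\,x^{n-1}.
\end{equation*}

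Finally I would split into cases. If $n$ is even then $\theta^{n}=Id$ and $n\equiv 0\pmod 2$, so the second term dies and we obtain $x^n a = a x^n$. If $n$ is odd then $\theta^{n}=\theta$, $\theta^{n-1}=Id$, and $n\equiv 1\pmod 2$, so
\begin{equation*}
x^n a \;=\; \theta(a)x^{n}+\delta(a)x^{n-1} \;=\; \bigl(\theta(a)x+\delta(a)\bigr)x^{n-1}.
\end{equation*}
The only nontrivial step is the nilpotency $\delta^{2}=0$; everything else is bookkeeping with parities and with the identity $\theta^{2}=Id$. I expect the $\delta^{2}=0$ check to be the main (though short) obstacle, since one must remember that the commutation hypothesis from the previous lemma forces $\theta(\alpha)=\alpha$, after which characteristic two does the rest.
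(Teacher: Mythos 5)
Your proposal is correct, but it takes a slightly different route from the paper. The paper's own proof is a one-line appeal to direct induction on $n$ using the rule $xa=\theta(a)x+\delta(a)$, whereas you specialize the preceding lemma's binomial-type expansion $x^na=\sum_{k=0}^{n}\binom{n}{k}(\theta^{n-k}\delta^{k})(a)x^{n-k}$ and kill all terms with $k\ge 2$ via the nilpotency $\delta^{2}=0$; your computation $\delta^{2}(a)=(\alpha\theta(\alpha)+\alpha^{2})(a-\theta(a))$ (using $\theta^{2}=Id$) and the conclusion $\delta^2=0$ when $\theta(\alpha)=\alpha$ in characteristic two are both correct, and the parity bookkeeping with $\theta^{n}$ and $\binom{n}{1}=n \pmod 2$ finishes the argument cleanly. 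What your route buys is transparency about the hypotheses: the lemma (and hence the corollary) needs $\delta\theta=\theta\delta$, which for an inner $\theta$-derivation of $\mathbb{F}_4$ forces $\alpha\in\mathbb{F}_2$, and you correctly flag that this is where $\theta(\alpha)=\alpha$ enters. Indeed, without it the statement fails: for $\alpha=t$ one checks directly that $x^{2}t=tx^{2}+x+t\neq tx^{2}$, so the commutation assumption is essential even though the paper leaves it implicit. The paper's inductive proof is shorter to state but hides both the nilpotency of $\delta$ and this hypothesis; your version makes the mechanism explicit at the cost of invoking the general expansion.
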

\begin{proof}
 Using definitions of $\theta$ and $\delta$ and solving inductively for $n$, the proof is straight-forward.\end{proof}
Next, we define a Gray map $\Phi: \mathfrak{R}_l \longrightarrow \mathbb{F}_{p^m}^l$ as

\begin{equation} \label{gray}
	\Phi(a_0+a_1v+\cdots+a_{l-1}v^{l-1})=(a_0,a_1,\ldots,a_{l-1})N,
\end{equation}
where $a_i \in \mathbb{F}_{p^m}~ \forall i\in\{0,1,\ldots,l-1\}$ and $N\in GL_l(\mathbb{F}_{p^m})$ such that $NN^T=\beta I_l, ~\beta \in \mathbb{F}_{p^m}^*$. Clearly, the function $\Phi$ is an $\mathbb{F}_{p^m}$-linear distance preserving map, and this map is extendable component-wise to $\mathfrak{R}^n_l$.

 \renewcommand{\arraystretch}{1.7}

\begin{table}[ht]

\begin{center}\caption{$(\theta,\delta)$-cyclic codes of length $n$ over $\mathbb{F}_{p^m}$}\label{Tbl1}
%\begin{adjustbox}{max width=\textwidth}

\begin{tabular}{|c | c |c|c|c|c|c|}
	\hline
$(n,p^m)$ & $\delta(a), a\in \mathbb{F}_{p^m}$ &$g(x)$ & $\mathcal{C} $&Remarks\\
  \hline
$(16,9)$ & $t^2(\theta(a)-a)$ & $1t^3t^7t1$ & $[16,12,4]_{9}$ & Almost MDS\\
  	\hline
  $(12,25)$ & $t(\theta(a)-a)$ & $t^4t^21$ & $[12,10,3]_{25}$ & MDS\\
  	\hline
   $(12,25)$ & $t(\theta(a)-a)$ & $t^{16}t^{17}t^{23}2t^{22}1$ & $[12,7,5]_{25}$ & Almost MDS\\
  	\hline
   $(20,25)$ & $t(\theta(a)-a)$ & $t^8t1$ & $[20,18,3]_{25}$ & MDS\\
  	\hline
   $(24,25)$ & $t^2(\theta(a)-a)$ & $t^{11}t^31$ & $[24,22,3]_{25}$ & MDS\\
  	\hline
   $(14,49)$ & $t^2(\theta(a)-a)$ & $t^{43}t^91$ & $[14,12,3]_{49}$ & MDS\\
  	\hline
   $(16,49)$ & $t(\theta(a)-a)$ & $t^5t^{15}1$ & $[16,14,3]_{49}$ & MDS\\
  	\hline
   $(21,49)$ & $t^2(\theta(a)-a)$ & $t^{20}t^{19}1$ & $[21,19,3]_{49}$ & MDS\\
  	\hline

\end{tabular}

%\end{adjustbox}
\label{overfield}
\end{center}

\end{table}

\section{$\boldsymbol{(\sigma,\delta)}$-cyclic codes over $\mathfrak{R}_l$}
In this section, first we discuss the structure of $(\theta,\delta)$-cyclic codes over $\mathbb{F}_{p^m}$. Further we derive the structure of $(\sigma,\delta)$-cyclic codes over the ring $\mathfrak{R}_l$.\\
\begin{definition}
    A $(\theta,\delta)$-linear code of length $n$ is said to be a $(\theta,\delta)$-cyclic code if for any codeword $c=(c_0,c_1,\ldots,c_{n-1})\in \mathcal{C}$, we have $(\theta,\delta)$-cyclic shift $\tau_{(\theta,\delta)}(c)=(\theta(c_{n-1})+\delta(c_0),\theta(c_{0})+\delta(c_1),\ldots,\theta(c_{n-2})+\delta(c_{n-1}))\in \mathcal{C}$.
\end{definition}
\begin{theorem}
    A code $\mathcal{C}$ of length $n$ over $\mathbb{F}_{p^m}$ is a $(\theta,\delta)$-cyclic code if and only if code $\mathcal{C}$ is a left $\mathbb{F}_{p^m}[x;\theta,\delta]$-submodule of $\frac{\mathbb{F}_{p^m}[x;\theta,\delta]}{\langle x^n-1\rangle}$.
\end{theorem}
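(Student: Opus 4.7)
The natural plan is the standard polynomial identification. I would associate a codeword $c=(c_0,c_1,\ldots,c_{n-1})$ with the residue class of $c(x)=c_0+c_1x+\cdots+c_{n-1}x^{n-1}$ in $R_n:=\mathbb{F}_{p^m}[x;\theta,\delta]/\langle x^n-1\rangle$ (viewed as a left quotient module over the skew polynomial ring), and show that left multiplication by $x$ on $R_n$ realizes the shift $\tau_{(\theta,\delta)}$ exactly. Granted this, the theorem follows almost at once: any element of $R_n$ is an $\mathbb{F}_{p^m}$-linear combination of the left generators $\{1,x,\ldots,x^{n-1}\}$, so closure under left multiplication by $x$ together with $\mathbb{F}_{p^m}$-linearity is equivalent to closure under left multiplication by every element of $R_n$.

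The one substantive computation is the evaluation of $x\cdot c(x)$ modulo $x^n-1$. Using the defining relation $xa=\theta(a)x+\delta(a)$ for $a\in\mathbb{F}_{p^m}$, I get
\begin{align*}
x\cdot c(x)
&=\sum_{i=0}^{n-1}\bigl(\theta(c_i)x+\delta(c_i)\bigr)x^i\\
&=\sum_{i=0}^{n-1}\delta(c_i)x^i+\sum_{i=0}^{n-1}\theta(c_i)x^{i+1}.
\end{align*}
Folding the top term $\theta(c_{n-1})x^n$ back to $\theta(c_{n-1})$ via $x^n\equiv 1$ and regrouping by powers of $x$, I obtain
$$\bigl(\theta(c_{n-1})+\delta(c_0)\bigr)+\bigl(\theta(c_0)+\delta(c_1)\bigr)x+\cdots+\bigl(\theta(c_{n-2})+\delta(c_{n-1})\bigr)x^{n-1},$$
which is exactly the polynomial attached to $\tau_{(\theta,\delta)}(c)$ under the identification above.

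Both directions then fall out quickly. If $\mathcal{C}$ is $(\theta,\delta)$-cyclic, then $x\cdot c(x)\in\mathcal{C}$ whenever $c(x)\in\mathcal{C}$; induction on the exponent yields $x^j\cdot c(x)\in\mathcal{C}$ for every $j\geq 0$, and combining this with the $\mathbb{F}_{p^m}$-linearity of $\mathcal{C}$ gives $f(x)\cdot c(x)\in\mathcal{C}$ for every $f(x)=\sum_{j}a_jx^j\in R_n$; hence $\mathcal{C}$ is a left submodule. Conversely, every left submodule of $R_n$ is automatically $\mathbb{F}_{p^m}$-linear and closed under left multiplication by $x$, so the same computation delivers $\tau_{(\theta,\delta)}(c)\in\mathcal{C}$ for each $c\in\mathcal{C}$.

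The proof is essentially bookkeeping once the multiplication-by-$x$ formula is in hand, and I do not foresee any serious obstacle. The only point where care is needed is the index shift when wrapping $x^n$ around modulo $x^n-1$, so that $\delta(c_i)$ and $\theta(c_{i-1})$ really land in the same coordinate and reproduce $\tau_{(\theta,\delta)}$ exactly as defined; I would sanity-check this on a small case (say $n=3$) before writing the general displayed line to rule out an off-by-one error.
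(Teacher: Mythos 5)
Your proposal is correct and follows essentially the same route as the paper: identify codewords with polynomials in $\mathbb{F}_{p^m}[x;\theta,\delta]/\langle x^n-1\rangle$, verify via $xa=\theta(a)x+\delta(a)$ that left multiplication by $x$ reproduces $\tau_{(\theta,\delta)}$, and then pass between closure under $x$ plus linearity and closure under all of the skew polynomial ring. Your write-up is in fact more explicit than the paper's (which states the formula $xc(x)=\sum_{i}[\theta(c_{i-1})+\delta(c_i)]x^i$ without derivation), and your index bookkeeping is accurate.
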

\begin{proof}
Since the code $\mathcal{C}$ is a linear code, for any $c(x),c'(x) \in \mathcal{C}$ we get $c(x)+c'(x)\in \mathcal{C}$. For a codeword $c(x)=c_0+c_1x+\cdots+c_{n-1}x^{n-1}$ in the $(\theta,\delta)$-cyclic code $\mathcal{C}$, $xc(x)=\sum_{i=0}^{n-1}[\theta(c_{i-1})+\delta({c_i})]x^i \in \mathcal{C}$ where $-1\equiv n-1.$ Therefore, for a polynomial $a(x)\in \mathbb{F}_{p^m}[x;\theta,\delta]$ and $c(x)\in \mathcal{C}$, $a(x)c(x) \in \mathcal{C}$ and the code $\mathcal{C}$ is a left $\mathbb{F}_{p^m}[x;\theta,\delta]$-submodule. \\
Conversely, suppose the code $\mathcal{C}$ is a left $\mathbb{F}_{p^m}[x;\theta,\delta]$-submodule of $\frac{\mathbb{F}_{p^m}[x;\theta,\delta]}{\langle x^n-1\rangle}$ then by the definition of submodule it is linear. Now for any $c(x)\in \mathcal{C}$, take $a(x)=x\in \mathbb{F}_{p^m}[x;\theta,\delta]$, then $xc(x)\in \mathcal{C}$. Therefore, the code $\mathcal{C}$ is a $(\theta,\delta)$-cyclic code.
\end{proof}

\begin{theorem}
    Let $\mathcal{C}$ be a left submodule of $\mathbb{F}_{p^m}[x;\theta,\delta]/\langle x^n-1\rangle$. Then $\mathcal{C}$ is generated by a monic polynomial $g(x)$ of minimum degree, where $g(x)$ right divides $x^n-1$ over $\mathbb{F}_{p^m}[x;\theta,\delta]$.
\end{theorem}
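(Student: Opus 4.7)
The plan is to mimic the classical argument for cyclic codes, adapted to the right-Euclidean structure of the skew polynomial ring $\mathbb{F}_{p^m}[x;\theta,\delta]$, which the excerpt already records as being a right (and left) Euclidean domain because $\mathbb{F}_{p^m}$ is a division ring. So right-division with remainder is available: for any $f(x), h(x) \in \mathbb{F}_{p^m}[x;\theta,\delta]$ with $h(x) \neq 0$ there exist $q(x), r(x)$ with $f(x) = q(x)h(x) + r(x)$ and $\deg r < \deg h$.

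First I would pick a nonzero polynomial of minimum degree in $\mathcal{C}$ and, after scaling by the inverse of its leading coefficient (which lives in the field $\mathbb{F}_{p^m}$, hence is a legal left-module operation), assume it is monic; call it $g(x)$. Next, given an arbitrary $c(x) \in \mathcal{C}$, apply right-division: $c(x) = q(x)g(x) + r(x)$ with $\deg r < \deg g$. Since $\mathcal{C}$ is a left $\mathbb{F}_{p^m}[x;\theta,\delta]$-submodule, $q(x)g(x) \in \mathcal{C}$, and therefore $r(x) = c(x) - q(x)g(x) \in \mathcal{C}$. Minimality of $\deg g$ forces $r(x) = 0$, so $c(x) = q(x)g(x)$ and $\mathcal{C} = \langle g(x)\rangle$.

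For the divisibility claim, I would run the same right-division argument on $x^n - 1$: write $x^n - 1 = q'(x)g(x) + r'(x)$ in $\mathbb{F}_{p^m}[x;\theta,\delta]$ with $\deg r' < \deg g$. Reducing modulo $x^n - 1$, the image of $q'(x)g(x)$ equals $-r'(x)$, so $r'(x)$ (viewed in the quotient) lies in $\mathcal{C}$; by minimality of $\deg g$ again, $r'(x) = 0$, giving $g(x)$ as a right divisor of $x^n - 1$.

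I do not expect a serious obstacle. The only subtlety worth flagging is being careful that the noncommutative multiplication rule $xr = \theta(r)x + \delta(r)$ does not spoil either the existence of right-division (it does not, since the leading term of $q(x)h(x)$ is controlled by $\sigma^{\deg q}$ applied to the leading coefficient of $h(x)$, which is a unit when $h$ is monic) or the closure of $\mathcal{C}$ under the step $q(x)g(x) \in \mathcal{C}$, which uses left-module closure in the correct order. Once those sides are kept straight, the remainder argument runs identically to the commutative cyclic-code case.
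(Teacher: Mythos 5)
Your proof is correct and follows essentially the same route as the paper: take a monic polynomial $g(x)$ of minimal degree in $\mathcal{C}$, use right Euclidean division in $\mathbb{F}_{p^m}[x;\theta,\delta]$ together with left-submodule closure to force the remainder to vanish, and then repeat the division argument on $x^n-1$ to get right divisibility. If anything, your handling of the second step is slightly more careful than the paper's, since you explicitly pass to the quotient (where $x^n-1\equiv 0$) to see that the remainder lies in $\mathcal{C}$.
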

\begin{proof}

 Let $g(x)$ be a monic polynomial of minimum degree in $\mathcal{C}$. Then for a polynomial $c(x)\in \mathcal{C}$, there exist unique polynomials $q(x)$ and $r(x)$ in $\mathbb{F}_{p^m}[x;\theta,\delta]$ such that $$c(x)=q(x)g(x)+r(x)$$
where $\deg(r(x))<\deg(g(x))$ or $r(x)=0$. Since polynomials $c(x)$ and $g(x)$ belong to $\mathbb{F}_{p^m}[x;\theta,\delta]$-submodule $\mathcal{C}$, polynomial $r(x)=c(x)-q(x)g(x)\in \mathcal{C}$ and we get $r(x)=0$. Therefore, the code $\mathcal{C}$ is generated by $g(x)$.\\
Now consider polynomial $x^n-1$ in place of $c(x)$, then we get $x^n-1=q(x)g(x)$ implies $g(x)|_r (x^n-1)$, that is, polynomial $g(x)$ right divides $x^n-1$.
\end{proof}
Codes given in the above theorem are free modules of dimension $k=n-\deg(g(x))$. And the rows of the generator matrix of these codes are given by $\tau_{\theta,\delta}^i(c)$ where $0\leq i\leq k-1$.\\

 \begin{example}
Let $\mathbb{F}_{7^2}=\mathbb{F}_{7}(t)$ be the field of order $49$. We define $\theta \in Aut(\mathbb{F}_{49})$ as $\theta(a)=a^{7}$ and let $\delta(a) =t^2(\theta(a)-a)$ where $a\in \mathbb{F}_{49}$.\\
Let $\mathcal{C}$ be a $(\theta, \delta)$-cyclic code of length $21$ over $\mathbb{F}_{49}$ generated by $g(x) = x^2 + t^{19}x + t^{20}$, where the polynomial $g(x)$ right divides $x^{21}-1$ in $\mathbb{F}_{49}[x;\theta,\delta]$. The factorization of $x^{21}-1$ in terms of $g(x)$ is given by
\begin{align*}
    x^{21}-1 = &(x^{19}+t^{13}x^{18}+t^7x^{17}+t^{10}x^{16}+t^2x^{15}+t^{45}x^{14}+t^{29}x^{13}+t^7x^{12}+t^2x^{11}+t^{44}x^{10}
+t^{36}x^9\\ &+t^{17}x^8+t^{32}x^7+t^{46}x^{6}+t^{44}x^5+t^{30}x^4+t^{24}x^3+t^{41}x^{2}+t^{26}x+t^{37})(x^2 + t^{19}x + t^{20})
\end{align*}
 Then the code $\mathcal{C}$ is an MDS $(\theta, \delta)$-cyclic code with the parameters $[21,19,3]_{49}$. Note that, a code with the parameters $[n,k,d]$ is a maximum distance separable (MDS) code if the condition $n+1=k+d$ is satisfied. And if the length of the code is $k+d$ then the code is almost MDS.
\end{example}

Now, in the following theorem, we give the structure of $(\sigma,\delta)$-cyclic codes over the ring $\mathfrak{R}_l$. Here $\sigma$ is the extension of the automorphism $\theta$ of $\mathbb{F}_{p^m}$ to the ring $\mathfrak{R}_l$.

\begin{theorem}
    Let $\mathcal{C}=\sum_{i=1}^l\gamma_i\mathcal{C}_i$ be a linear code over $\mathfrak{R}_l$ where, for $i\in\{1,\ldots, l\}$, $\mathcal{C}_i$'s are linear codes over the field $\mathbb{F}_{p^m}$. Then the code $\mathcal{C}$ is $(\sigma,\delta)$-cyclic code over $\mathfrak{R}_l$ if and only if $\mathcal{C}_i$'s are $(\theta,\delta)$-cyclic codes over $\mathbb{F}_{p^m}$.
\end{theorem}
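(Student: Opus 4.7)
The plan is to prove the theorem by reducing the $(\sigma,\delta)$-cyclic shift on $\mathfrak{R}_l^n$ to $l$ independent $(\theta,\delta)$-cyclic shifts on $\mathbb{F}_{p^m}^n$ via the CRT decomposition $\mathfrak{R}_l=\bigoplus_{i=1}^{l}\gamma_i\mathbb{F}_{p^m}$. Every codeword $c\in\mathcal{C}$ has a unique expression $c=\sum_{i=1}^{l}\gamma_i c^{(i)}$ with $c^{(i)}\in\mathcal{C}_i\subseteq\mathbb{F}_{p^m}^n$, so the whole argument is about showing that the shift operator $\tau_{(\sigma,\delta)}$ respects this decomposition coordinate-by-coordinate.

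First I would record the compatibility of $\sigma$ and $\delta$ with the idempotents: because $\sigma$ is the extension of the Frobenius $\theta$ with $\sigma(v)=v$, and the $\gamma_i$ are $\mathbb{F}_p$-linear combinations of powers of $v$ with coefficients fixed by $\theta$ (namely roots of unity in the prime field and the rational number $1/(l-1)$), one has $\sigma(\gamma_i)=\gamma_i$ for every $i$. Combined with the additivity of $\delta$ and the product rule $\delta(\gamma_i r)=\delta(\gamma_i)r+\sigma(\gamma_i)\delta(r)$, and since $\delta(\gamma_i)=\alpha(\sigma(\gamma_i)-\gamma_i)=0$, this yields the key identities
\begin{equation*}
\sigma(\gamma_i r)=\gamma_i\,\theta(r), \qquad \delta(\gamma_i r)=\gamma_i\,\delta(r)
\end{equation*}
for any $r\in\mathbb{F}_{p^m}$ (viewing the restriction of $\delta$ to $\gamma_i\mathbb{F}_{p^m}\cong\mathbb{F}_{p^m}$ as a $\theta$-derivation on the field).

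Next I would compute $\tau_{(\sigma,\delta)}(c)$ for $c=\sum_{i}\gamma_i c^{(i)}$ coordinate-wise. The $j$-th entry of $\tau_{(\sigma,\delta)}(c)$ is $\sigma(c_{j-1})+\delta(c_j)$ with indices mod $n$; plugging in $c_j=\sum_{i}\gamma_i c^{(i)}_j$ and using the two identities above, this entry becomes $\sum_{i}\gamma_i(\theta(c^{(i)}_{j-1})+\delta(c^{(i)}_j))$. Hence
\begin{equation*}
\tau_{(\sigma,\delta)}(c)=\sum_{i=1}^{l}\gamma_i\,\tau_{(\theta,\delta)}(c^{(i)}).
\end{equation*}
From this master identity both directions follow cleanly. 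For the forward direction, take an arbitrary $c^{(i)}\in\mathcal{C}_i$ and form $c=\gamma_i c^{(i)}\in\mathcal{C}$; applying $\tau_{(\sigma,\delta)}$ and then multiplying by $\gamma_i$ using orthogonality $\gamma_i\gamma_j=\delta_{ij}\gamma_i$ extracts $\gamma_i\tau_{(\theta,\delta)}(c^{(i)})\in\gamma_i\mathcal{C}_i$, so $\tau_{(\theta,\delta)}(c^{(i)})\in\mathcal{C}_i$. For the backward direction, if each $\mathcal{C}_i$ is $(\theta,\delta)$-cyclic, then $\tau_{(\theta,\delta)}(c^{(i)})\in\mathcal{C}_i$ and the displayed identity shows $\tau_{(\sigma,\delta)}(c)\in\sum_{i}\gamma_i\mathcal{C}_i=\mathcal{C}$.

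The main obstacle I anticipate is the bookkeeping in the first step: rigorously checking that $\sigma(\gamma_i)=\gamma_i$ (which needs $\theta(\varsigma)=\varsigma$ or at least the symmetric-function style invariance of each idempotent under $\theta$) and that the restriction of the ring-level $\delta$ to the component $\gamma_i\mathbb{F}_{p^m}$ agrees, under the isomorphism $\gamma_i\mathfrak{R}_l\cong\mathbb{F}_{p^m}$, with the $\theta$-derivation used to define $(\theta,\delta)$-cyclic codes on the field side. Once these compatibility facts are in place, the remainder is a direct coordinate computation using the orthogonality of the $\gamma_i$.
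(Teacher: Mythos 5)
Your proposal follows essentially the same route as the paper: decompose each codeword through the orthogonal idempotents, use the inner-derivation form $\delta(r)=\alpha(\sigma(r)-r)$ to show $\sigma$ and $\delta$ act componentwise, derive the master identity $\tau_{\sigma,\delta}\bigl(\sum_{i}\gamma_i c_i\bigr)=\sum_{i}\gamma_i\,\tau_{\theta,\delta}(c_i)$, and read off both directions from it. The only differences are in rigor rather than substance: you make explicit the compatibility facts $\sigma(\gamma_i)=\gamma_i$ and $\delta(\gamma_i r)=\gamma_i\delta(r)$ (which the paper uses silently, and which indeed require $\theta$ to fix $\varsigma$), and in the forward direction you extract $\tau_{\theta,\delta}(c_i)\in\mathcal{C}_i$ by multiplying by $\gamma_i$ and invoking orthogonality, a step the paper leaves implicit.
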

\begin{proof}
Let $\mathcal{C}=\sum_{i=1}^l\gamma_i\mathcal{C}_i$ be a $(\sigma,\delta)$-cyclic codes of length $n$ over the ring $\mathfrak{R}_l$. Let $a=\sum_{i=1}^l\gamma_ic_i$, where $c_i=(c_{i0},c_{i1},c_{i2},\ldots, c_{i(n-1)}) \in \mathcal{C}_i$ and $a=(a_0,a_1,a_2,\ldots,a_{n-1})\in \mathcal{C}$. Then for all $i\in \{1,\ldots,l\}$ we have $$\sigma(a)=\sigma\left(\sum_{i=1}^l\gamma_ic_i\right)=\sum_{i=1}^l\gamma_i\theta(c_i)
%\text{   and}
$$
and
\begin{align*}
    \delta\left(\sum_{i=1}^l\gamma_ic_i\right)&=\alpha\left(\sigma\left(\sum_{i=1}^l\gamma_ic_i\right)-\sum_{i=1}^l\gamma_ic_i\right)\\
    &=\alpha\left(\sum_{i=1}^l\gamma_i\theta(c_i)-\sum_{i=1}^l\gamma_ic_i\right)\\
    &= \alpha\left(\sum_{i=1}^l\gamma_i(\theta(c_i)-c_i)\right)\\
    &=\sum_{i=1}^l\gamma_i\delta(c_i).
\end{align*}
Now we get
\begin{align*}
    \tau_{\sigma,\delta}(a)=&\tau_{\sigma,\delta}\left(\sum_{i=1}^l\gamma_ic_i\right)\\
    =&\Bigg( \sigma\left(\sum_{i=1}^l\gamma_ic_{i(n-1)}\right)+\delta\left(\sum_{i=1}^l\gamma_ic_{i0}\right),\sigma\left(\sum_{i=1}^l\gamma_ic_{i0}\right)+\delta\left(\sum_{i=1}^l\gamma_ic_{i1}\right)\\& ,\ldots,
    \sigma\left(\sum_{i=1}^l\gamma_ic_{i(n-2)}\right)+\delta\left(\sum_{i=1}^l\gamma_ic_{i(n-1)}\right)\Bigg)\\
    =& \sum_{i=1}^l\gamma_i\tau_{\theta,\delta}( c_i).
\end{align*}
Therefore, the codes $\mathcal{C}_i$'s are $(\theta,\delta)$-cyclic codes.\\
Conversely, suppose the codes $C_i$'s are $(\theta,\delta)$-cyclic codes. Then for any $a=(a_0,a_1,\ldots,a_{n-1})\in \mathcal{C}$, we get $a_j=\sum_{i=1}^l\gamma_ic_{ij}$ where $j\in \{0,1,\ldots,n-1\}$
%for some $a_i,b_i\in \mathbb{F}_{p^m}$
and
\begin{align*}
    \tau_{\sigma,\delta}(a)=& (\sigma(c_{n-1})+\delta(c_0),\sigma(c_0)+\delta(c_1),\ldots,
    \sigma(c_{n-2})+\delta(c_{n-1}))\\
    =&\Bigg(\sigma\left(\sum_{i=1}^l\gamma_ic_{i(n-1)}\right)+\delta\left(\sum_{i=1}^l\gamma_ic_{i0}\right),\sigma\left(\sum_{i=1}^l\gamma_ic_{i0}\right)+\delta\left(\sum_{i=1}^l\gamma_ic_{i1}\right)\\&,\ldots,
    \sigma\left(\sum_{i=1}^l\gamma_ic_{i(n-2)}\right)+\delta\left(\sum_{i=1}^l\gamma_ic_{i(n-1)}\right)\Bigg)\\
    =&\bigg(\sum_{i=1}^l\gamma_i\theta(c_{i(n-1)})+\sum_{i=1}^l\gamma_i\delta(c_{i0}),\ldots,
    \sum_{i=1}^l\gamma_i\theta(c_{i(n-2)})+\sum_{i=1}^l\gamma_i\delta(c_{i(n-1)})\bigg)\\
    =& \sum_{i=1}^l\gamma_i\tau_{\theta,\delta}(c_i).
\end{align*}
Since $\mathcal{C}_i$'s are $(\theta,\delta)$-cyclic codes, $\tau_{\theta,\delta}(c_i)\in \mathcal{C}_i$. Therefore, code $\mathcal{C}$ is a $(\sigma,\delta)$-cyclic code.

\end{proof}

\begin{theorem}
    Let $\mathcal{C}=\sum_{i=1}^l\gamma_i\mathcal{C}_i$ be a $(\sigma,\delta)$-cyclic code of length $n$ over $\mathfrak{R}_l$ where $\mathcal{C}_i$'s are linear codes over the field $\mathbb{F}_{p^m}$. Then the code $\mathcal{C}$ is principally generated by $f(x)=\sum_{i=1}^l\gamma_if_i(x),$ where  $f_i(x)$ right divides $x^n-1$ in $\mathbb{F}_{p^m}[x;\theta,\delta]$. In particular, $f(x)$ right divides $x^n-1$ in $\mathfrak{R}_l[x;\sigma,\delta].$
\end{theorem}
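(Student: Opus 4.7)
The plan is to reduce the claim to the component codes $\mathcal{C}_i$ via the idempotent decomposition and then glue the resulting generators together. By the previous theorem, each component $\mathcal{C}_i$ is a $(\theta,\delta)$-cyclic code over $\mathbb{F}_{p^m}$, so the monic-generator theorem proved earlier in this section supplies a monic $f_i(x)\in\mathbb{F}_{p^m}[x;\theta,\delta]$ of minimal degree that right-divides $x^n-1$ and generates $\mathcal{C}_i$. I then set $f(x)=\sum_{i=1}^{l}\gamma_i f_i(x)$ and aim to show $\mathcal{C}=\langle f(x)\rangle$ in $\mathfrak{R}_l[x;\sigma,\delta]/\langle x^n-1\rangle$.

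The technical hinge of the whole argument is the observation that each $\gamma_i$ commutes with $x$ in the skew polynomial ring. Inspecting the proof of the previous theorem, $\sigma$ acts componentwise on the decomposition, which forces $\sigma(\gamma_i)=\gamma_i$; combined with the formula $\delta(r)=\alpha(\sigma(r)-r)$ from the earlier lemma, this gives $\delta(\gamma_i)=0$, and hence $x\gamma_i=\sigma(\gamma_i)x+\delta(\gamma_i)=\gamma_i x$. Once this centrality is in hand, the orthogonality relations $\gamma_i\gamma_j=0$ for $i\ne j$ and $\gamma_i^2=\gamma_i$ can be propagated past any skew polynomial in $x$, which is all the gluing arithmetic the proof needs.

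With these facts, one direction of $\mathcal{C}=\langle f(x)\rangle$ is immediate: for $c(x)=\sum_i\gamma_i c_i(x)\in\mathcal{C}$ with $c_i(x)=q_i(x)f_i(x)$, the product $\bigl(\sum_i\gamma_i q_i(x)\bigr)f(x)$ expands by orthogonality to $\sum_i\gamma_i q_i(x)f_i(x)=c(x)$. The reverse inclusion is equally mechanical: any $g(x)f(x)$ with $g(x)=\sum_i\gamma_i g_i(x)$ collapses to $\sum_i\gamma_i g_i(x)f_i(x)\in\sum_i\gamma_i\mathcal{C}_i=\mathcal{C}$. For the divisibility claim, write $x^n-1=h_i(x)f_i(x)$ in $\mathbb{F}_{p^m}[x;\theta,\delta]$, set $h(x)=\sum_i\gamma_i h_i(x)\in\mathfrak{R}_l[x;\sigma,\delta]$, and observe $h(x)f(x)=\sum_i\gamma_i h_i(x)f_i(x)=\bigl(\sum_i\gamma_i\bigr)(x^n-1)=x^n-1$, which yields the right divisibility of $x^n-1$ by $f(x)$ in $\mathfrak{R}_l[x;\sigma,\delta]$. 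The main obstacle is therefore the verification of centrality of the idempotents under the skew structure; past that, everything is bookkeeping with the orthogonal idempotents $\gamma_i$ and the identity $\sum_i\gamma_i=1$.
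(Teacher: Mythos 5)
Your proof follows essentially the same route as the paper: take generators $f_i(x)$ of the component codes, glue them as $f(x)=\sum_{i=1}^{l}\gamma_i f_i(x)$, and use orthogonality of the idempotents to get both inclusions $\mathcal{C}\subseteq\langle f(x)\rangle$, $\langle f(x)\rangle\subseteq\mathcal{C}$ and the right divisibility of $x^n-1$ in $\mathfrak{R}_l[x;\sigma,\delta]$. The only difference is that you spell out the centrality $x\gamma_i=\gamma_i x$ (via $\sigma(\gamma_i)=\gamma_i$ and $\delta(\gamma_i)=0$), a step the paper leaves implicit (as it also does in the preceding theorem), so this is added detail rather than a different argument.
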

\begin{proof}
    Since $C_i$'s are $(\theta,\delta)$-cyclic code of length $n$ over the field $\mathbb{F}_{p^m}$, then for some polynomial $f_i(x)$ with $f_i(x)|_r(x^n-1)$ we get $C_i=\langle f_i(x)\rangle$ where $i\in \{1,\ldots,l\}$. Further take $f(x)=\sum_{i=1}^l\gamma_if_i(x)$. Then $\langle \gamma_i f_i(x)\rangle\subseteq \mathcal{C}$ implies $\langle f(x)\rangle\subseteq \mathcal{C}$. Again from
    $$\gamma_if(x)=\gamma_if_i(x), \hfil i\in \{1,\ldots, l\},$$
    we get $\mathcal{C}\subseteq \langle f(x)\rangle$. Therefore, the code $\mathcal{C}$ is principally generated by the polynomial $ f(x)$.\\
    Similarly, for right divisors $f_i(x)$ of $x^n-1$ in $\mathbb{F}_{p^m}[x;\theta,\delta]$ we get $f(x)=\sum_{i=1}^l\gamma_if_i(x)$ as a right divisor of $x^n-1$ in $\mathfrak{R}_l[x;\sigma,\delta]$.
\end{proof}

\renewcommand{\arraystretch}{1.9}

\begin{table}[ht]
\begin{center}\caption{$(\sigma,\delta)$-cyclic codes of length $n$ over $\mathfrak{R}_2$ and their Gray images}\label{Tbl2}
\begin{tabular}{|c| c |c|c|c|c|c|}
		%	\caption{}
	\hline
$(n,q)$ & $\alpha(\theta(a)-a)$ &$g_1(x),~ g_2(x)$&  $\Phi(\mathcal{C})$&Remarks\\
  \hline
  $(6,9)$ & $t^2(\theta(a)-a)$ & $t^51$, $t^6t^21$ & $[12,9,3]_9$ & Optimal\\
  	\hline
  $(6,9)$ & $t^2(\theta(a)-a)$ & $t^51$, $t^3tt^51$ & $[12,8,4]_9$ & Optimal\\
	\hline
	 $(10,25)$ & $t^2(\theta(a)-a)$ & $t^{16}1$, $t^4t^{11}1$ & $[20,17,3]_{25}$ & Almost MDS\\
  	\hline
    $(10,25)$ & $t^2(\theta(a)-a)$ & $t^{20}t^{13}1$, $tt^{10}1$ & $[20,16,4]_{25}$ & Almost MDS\\
  	\hline
   $(12,25)$ & $t(\theta(a)-a)$ & $t^21$, $t^{22}t^31$ & $[24,21,3]_{25}$ & Almost MDS\\
  	\hline
   $(24,25)$ & $t(\theta(a)-a)$ & $t^21$, $t11$ & $[48,45,3]_{25}$ & Almost MDS\\
  	\hline
   $(6,49)$ & $t(\theta(a)-a)$ & $t^21$, $t^91$ & $[12,10,3]_{49}$ & MDS\\
  	\hline
  $(6,49)$ & $t(\theta(a)-a)$ & $t^21$, $t^9t^{19}1$ & $[12,9,4]_{49}$ & MDS\\
	\hline

 $(21,49)$ & $t(\theta(a)-a)$ & $61$, $2t41$ & $[42,38,4]_{49}$ & Almost MDS\\
	\hline

\end{tabular}
\label{2copies}
\end{center}
\end{table}

 \renewcommand{\arraystretch}{1.9}

\begin{table}
\begin{center}\caption{Skew cyclic
codes of length $n$ over $\mathbb{F}_{p^m}$ and $\mathfrak{R}_l$ and their Gray images}
\label{Tbl3}
\begin{tabular}{|c|c|c|c|c|c|c|}
		%	\caption{}
	\hline
$(n,p^m)$ & $\alpha(\theta(a)-a)$ &$g_1(x),g_2(x),\ldots, g_l(x)$&  $\Phi(\mathcal{C})$&Comparison\\
  \hline
  $(21,4)$ & $t(\theta(a)-a)$ & $100011,~11$ & $[42,36,4]$ & Optimal \\

  \hline
$(28,4)$ & $0(\theta(a)-a)$ & $10t^211,~t^21$ & $[56,50,4]$ & Optimal \\
%$[24,22,2]$\cite{grassl}\\
	\hline
$(40,9)$ & $t(\theta(a)-a)$ & $t^32101,~1t^31$ & $[80,74,4]$ & Optimal \\
  \hline

 $(24,16)$ & $t^2(\theta(a)-a)$ & $t^{11}1$ & $[24,23,2]$ & $[24,22,2]$\cite{shikha}\\
	\hline
 $(12,16)$ & $t(\theta(a)-a)$ & $t0t^7t^{13}1,~ t^{14}t^{11}t^{10}1$ & $[24,17,6]$ & $[24,16,6]$\cite{shikha}\\
	\hline
 $(40,16)$ & $t^5(\theta(a)-a)$ & $t^{12}1$ & $[40,39,2]$ & $[40,37,2]$\cite{shikha}\\
	\hline
 $(20,16)$ & $t^3(\theta(a)-a)$ & $t^21,~ t^{12}t1$ & $[40,37,3]$ & $[40,35,3]$\cite{shikha}\\
	\hline
 $(20,16)$ & $t^3(\theta(a)-a)$ & $t^9t^71,~ t^2t^{13}11$ & $[40,35,4]$ & $[40,34,4]$\cite{shikha}\\
	\hline

\end{tabular}
\end{center}
\end{table}

 \begin{remark}
In the Tables [\ref{overfield}, \ref{2copies}, \ref{Tbl3}], we write a polynomial by its coefficients.  The coefficients of a polynomial are framed in the ascending order of the variable's exponents. In particular, we substitute $c_0c_1c_2$ in place of the polynomial $c_0+c_1x+c_2x^2$.
\end{remark}
%, making them well-suited for applications that require fast decoding.
 \section{Reversible and DNA codes over $\mathbb{F}_4$} \label{rev}
 In this section, we delve into the theory of reversible codes over a skew polynomial ring with a derivation. Reversible codes have several advantages over other types of error-correcting codes. This class of codes has a simple decoding algorithm that can be implemented efficiently. It is also robust to errors that occur in bursts, making them useful in situations where errors are likely to occur in clusters.

\begin{definition}
	Let $g(x)=g_0+g_1 x+\cdots+g_mx^m$ be a polynomial in $\mathbb{F}_4[x;\theta,\delta]$. Then $g(x)$ is said to be a palindromic polynomial if $g_i=g_{m-i}$ and
a $(\theta,\delta)$-palindromic if $g_i=\theta(g_{m-i})-\delta(g_{m-i+1})$ where $ i \in \{1,2,\ldots,m\}$.
\end{definition}

% \begin{lemma}
%      Let $\mathcal{C}$ be a $(\theta, \delta)$-cyclic code. If $\delta(f(x)) \in \mathcal{C}$ for some polynomial $f(x) = \sum_{i=0}^{s} a_ix^i\in \mathcal{C}$, then
%      $\theta(f(x)) \in \mathcal{C}$ and $f(x)x \in \mathcal{C}.$
% \end{lemma}
% \begin{proof}
%      Let $f(x)= \sum_{i=0}^{s} a_ix^i$. Then
%  $\delta(a_i) = \alpha(\theta(a_i)-a_i)$
%    %or $\alpha \theta(f_i) = \delta(f_i)+\alpha f_i$
%    gives
%    $\alpha \theta(a_i) x^i = (\delta(a_i)+\alpha a_i) x^i$ or
%    $\alpha \sum_{i=0}^{s}\theta(a_i) x^i = \sum_{i=0}^{s} \delta(a_i) x^i+\alpha \sum_{i=0}^{s} a_i x^i$.
%    Therefore,
%    $\alpha \theta(f(x)) = \delta (f(x))+\alpha f(x).$
%   \\ Since $\alpha$ is an unit element, then
%    $\theta(f(x)) = \alpha^{-1}\delta (f(x))+f(x).$
%    Hence, $\theta(f(x)) \in \mathcal{C}.$\\
%    Now,
%    \begin{align*}
%        f(x)x =& \left ( \sum_{i=0}^s a_i x^i \right ) x \\
%        =& \sum_{i=0}^s (a_ix) x^i\\
%        =&  \sum_{i=0}^s x\theta(a_i) x^i + \sum_{i=0}^s \delta(a_i) x^i\\
%        =& x \theta(f(x)) +\delta(f(x)).
%    \end{align*}
% Thus, $ f(x)x \in \mathcal{C}$.
% \end{proof}
Now, using the above definitions, we derive results for the reversible codes in $\mathbb{F}_4[x;\theta,\delta]$.

\begin{theorem}
    Let $\mathcal{C}$ be a $(\theta, \delta)$-cyclic code of even length $n$ generated by a polynomial $g(x)=g_0+g_1 x+\cdots+g_mx^m$ over $\mathbb{F}_4$, where $m$ is odd. Then $\mathcal{C}$ will be reversible if $g(x)$ is palindromic and $\delta(g((x)) \in \mathcal{C}$.
\end{theorem}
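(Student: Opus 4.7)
The plan is to verify the reversibility of $\mathcal{C}$ one basis vector at a time. Because $\mathcal{C}=\langle g(x)\rangle$ has the $\mathbb{F}_4$-basis $\{x^j g(x) : 0\le j\le k-1\}$ with $k=n-m$, and the coordinate-reversal map $R$ is $\mathbb{F}_4$-linear, it suffices to prove $R(x^j g(x))\in\mathcal{C}$ for each $j$. The computation splits cleanly because of two characteristic-$2$ simplifications. First, $\theta^2=\mathrm{id}$. Second, $\delta^2=0$: the earlier lemma gives $\theta\delta=\delta\theta$ (when $\theta$ fixes $\alpha$), and direct expansion yields $\delta^2(a)=\alpha(\theta(\alpha)+\alpha)(\theta(a)+a)=0$. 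The earlier corollary then takes the compact form $x^N a=ax^N$ for $N$ even, and $x^N a=\theta(a)x^N+\delta(a)x^{N-1}$ for $N$ odd, for every $a\in\mathbb{F}_4$.

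For even $j$, the corollary gives $x^j g = g\cdot x^j$, so the codeword of $x^j g$ is $(0,\dots,0,g_0,g_1,\dots,g_m,0,\dots,0)$ with the $g$-block starting at position $j$. Reversing the vector and applying the palindromic relation $g_i=g_{m-i}$ rewrites $R(x^j g)$ as $\sum_{i=0}^m g_i\,x^{n-1-m-j+i}$. Since $n-m$ is odd ($n$ even, $m$ odd) and $j$ is even, the exponent $n-1-m-j$ is even, so the corollary yields $R(x^j g)=x^{n-1-m-j}g(x)$, a left multiple of $g(x)$ and therefore in $\mathcal{C}$.

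The odd-$j$ case is where the assumption $\delta(g(x))\in\mathcal{C}$ is used. Write $x^j g = \theta(g)(x)\,x^j+\delta(g)(x)\,x^{j-1}$ with $\theta(g)(x)=\sum\theta(g_i)x^i$ and $\delta(g)(x)=\sum\delta(g_i)x^i$; note $\delta(g_0)=\delta(g_m)=\delta(1)=0$. Reading the coefficients of $x^j g$ position-by-position, reversing, and invoking the palindromic identity produces a codeword with value $1$ at position $n-1-m-j$ and value $\theta(g_s)+\delta(g_{s-1})$ at position $n-1-m-j+s$ for $1\le s\le m$. I would then verify that this matches exactly the coefficient pattern of
\[ x^{n-1-m-j}\,g(x)\;+\;x^{n-m-j}\,\delta(g(x))\;+\;x^{n-2-m-j}\,\delta(g(x)). \]
Since $n-1-m-j$ is odd while $n-m-j$ and $n-2-m-j$ are even, applying the corollary term by term, the two copies of $x^{n-2-m-j+i}\delta(g_i)$ (one produced inside the odd shift $x^{n-1-m-j}g$, one from the last summand) cancel in characteristic $2$, leaving precisely the target pattern. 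All three summands lie in $\mathcal{C}$: the first because $\mathcal{C}$ is a left ideal containing $g(x)$, the other two because $\delta(g(x))\in\mathcal{C}$ by hypothesis.

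The odd-$j$ case carries the heart of the proof; the main obstacle is the index bookkeeping needed to match the two position-indexed sequences above. What makes the alignment work is the parity discipline afforded by $n-m$ odd (which dictates exactly where the corollary commutes and where it twists) together with the cancellation of the duplicated $\delta$-terms in characteristic $2$. Once $R(x^j g)\in\mathcal{C}$ is established for both parities, linearity of $R$ gives $R(\mathcal{C})\subseteq\mathcal{C}$, so $\mathcal{C}$ is reversible.
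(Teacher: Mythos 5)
Your proposal is correct and follows essentially the same route as the paper's own proof: both rest on the corollary that $x^N a = a x^N$ for even $N$ and $x^N a = \theta(a)x^N + \delta(a)x^{N-1}$ for odd $N$, push the coordinate reversal through the shifts of $g(x)$, use the palindromic relation $g_i = g_{m-i}$ to realign coefficients, and absorb the leftover terms into $\mathcal{C}$ via the hypothesis $\delta(g(x))\in\mathcal{C}$. The only differences are organizational (you reduce to the basis elements $x^j g(x)$ by linearity of the reversal, whereas the paper carries a general codeword $\sum_j a_j x^j g(x)$ through the same computation) and that you explicitly justify the corollary via $\theta\delta=\delta\theta$ and $\delta^2=0$ under the paper's standing assumption that $\theta$ fixes $\alpha$.
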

\begin{proof}
    Let $\mathcal{C} = \langle g(x) \rangle$ be a $(\theta, \delta)$-cyclic code of even length $n$, where $g(x) =\sum_{i=0}^{m}g_ix^i,$ $m$ is odd. Let $c \in \mathcal{C}$ be a codeword. Then $c = \sum_{j=0}^{k-1}a_jx^jg(x)$, where $a_j$'s $\in \mathbb{F}_4$.
 \begin{align*}
     \noindent c =& \sum_{j=0}^{k-1} a_j x^j \sum_{i=0}^{m}g_i x^i\\
        =& \sum_{j=0}^{k-1} a_j \sum_{i=0}^{m}\theta^j(g_i) x^{i+j}+ \sum_{j=0}^{(k-3)/2} a_{2j+1} \sum_{i=0}^{m}\delta(g_i) x^{i+2j}.
        \end{align*}
       Now, we find the reverse of above codeword $c$, which is given by
    %     \begin{align*}
    %  c^r = & \sum_{j=0}^{k-1} a_j \sum_{i=0}^{m}\theta^j(g_i) x^{n-1-(i+j)}+ \sum_{j=0}^{(k-3)/2} a_{2j+1} \\& \sum_{i=0}^{m}\delta(g_i) x^{n-1-(i+2j)}\\
    %   = & \sum_{j=0}^{(k-1)/2} a_{2j} \left[\sum_{i=0}^{m}g_i x^{m-i}\right]x^{k-1-2j}+\sum_{j=0}^{(k-3)/2} a_{2j+1}x \\&\left[\sum_{i=0}^{m}g_i x^{m-i}\right]x^{k-3-2j} + \sum_{j=0}^{(k-3)/2} a_{2j+1} \Bigg[\Big(\sum_{i=0}^{m}\delta(g_i)\\
    %   & x^{m-i}\Big)x^{k-1-2j}+\left(\sum_{i=0}^{m}\delta(g_i) x^{m-i}\right)x^{k-3-2j} \Bigg].
    % \end{align*}
     \begin{align*}
     c^r = & \sum_{j=0}^{k-1} a_j \sum_{i=0}^{m}\theta^j(g_i) x^{n-1-(i+j)}+ \sum_{j=0}^{(k-3)/2} a_{2j+1}\sum_{i=0}^{m}\delta(g_i) x^{n-1-(i+2j)}\\
      = & \sum_{j=0}^{(k-1)/2} a_{2j} \left[\sum_{i=0}^{m}g_i x^{m-i}\right]x^{k-1-2j}+\sum_{j=0}^{(k-3)/2} a_{2j+1}x \left[\sum_{i=0}^{m}g_i x^{m-i}\right]x^{k-3-2j} \\&+ \sum_{j=0}^{(k-3)/2} a_{2j+1} \Bigg[\Big(\sum_{i=0}^{m}\delta(g_i) x^{m-i}\Big)x^{k-1-2j}+\left(\sum_{i=0}^{m}\delta(g_i) x^{m-i}\right)x^{k-3-2j} \Bigg].
    \end{align*}
    Further, using the palindromic condition of the polynomial $g(x)$, we get
    % \begin{align*}
    %     c^r=& \sum_{j=0}^{(k-1)/2} a_{2j} g(x)x^{k-1-2j}+\sum_{j=0}^{(k-3)/2} a_{2j+1}x g(x)x^{k-3-2j}\\
    %   & + \sum_{j=0}^{(k-3)/2} a_{2j+1} \left[\delta(g(x)) x^{k-1-2j}+\delta(g(x)) x^{k-3-2j} \right]\\
    %   =& \sum_{j=0}^{k-1} a_{j}x^{k-1-j} g(x) + \sum_{j=0}^{(k-3)/2} a_{2j+1} \delta(g(x))\big[ x^{k-1-2j}\\
    %   &+ x^{k-3-2j} \big]\\
    %   =& \sum_{j=0}^{k-1} a_{j}x^{k-1-j} g(x) + \sum_{j=0}^{(k-3)/2} a_{2j+1} \big[ x^{k-1-2j}\\
    %   &+ x^{k-3-2j} \big] \delta(g(x)).
    % \end{align*}
    \begin{align*}
        c^r=& \sum_{j=0}^{(k-1)/2} a_{2j} g(x)x^{k-1-2j}+\sum_{j=0}^{(k-3)/2} a_{2j+1}x g(x)x^{k-3-2j}\\
      & + \sum_{j=0}^{(k-3)/2} a_{2j+1} \left[\delta(g(x)) x^{k-1-2j}+\delta(g(x)) x^{k-3-2j} \right]\\
      =& \sum_{j=0}^{k-1} a_{j}x^{k-1-j} g(x) + \sum_{j=0}^{(k-3)/2} a_{2j+1} \delta(g(x))\big[ x^{k-1-2j}+ x^{k-3-2j} \big]\\
      =& \sum_{j=0}^{k-1} a_{j}x^{k-1-j} g(x) + \sum_{j=0}^{(k-3)/2} a_{2j+1} \big[ x^{k-1-2j}+ x^{k-3-2j} \big] \delta(g(x)).
    \end{align*}
    If $\delta(g(x))\in \mathcal{C}$, then $c^r\in \mathcal{C}$. Thus, the code $\mathcal{C}$ is reversible.
\end{proof}

\begin{theorem}
    Let $\mathcal{C}$ be a $(\theta, \delta)$-cyclic code of even length $n$ generated by a monic polynomial $g(x)=\sum_{i=0}^{m}g_i x^i$ over $\mathbb{F}_4$, where $m$ is even. Then $\mathcal{C}$ will be reversible if $g(x)$ is $(\theta,\delta)$-palindromic and $\Big( \sum_{i=0}^{m} g_i x^{m-i} \Big)x \in \mathcal{C}$.
\end{theorem}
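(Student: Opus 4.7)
The plan is to follow the template of the preceding theorem, adapting it to $m$ even and replacing the plain palindromic condition with the $(\theta,\delta)$-palindromic one. Begin by taking an arbitrary codeword $c = \sum_{j=0}^{k-1} a_j x^j g(x)$, where $k = n - m$ is even since $n$ and $m$ are both even, and expand each product $x^j g(x)$ by pushing $x^j$ past every coefficient $g_i$ via the Corollary specialized to $\mathbb{F}_4$: this produces $\theta^j(g_i)\, x^{i+j}$, together with an extra $\delta$-correction $\delta(g_i)\, x^{i+j-1}$ precisely when $j$ is odd.

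Next, form the reverse $c^r$ by sending every monomial $x^s$ to $x^{n-1-s}$ and change the inner summation variable by $i \mapsto m-i$. Each inner block then becomes a right multiple of the reversed polynomial $h_0(x) := \sum_{i=0}^{m} g_i\, x^{m-i}$ (respectively, of $\sum_i \delta(g_i)\, x^{m-i}$) by a monomial of the form $x^{k-1-j}$ or $x^{k-j}$. In particular, the $\delta$-correction blocks are shifted up by one power of $x$ relative to the main blocks, and this is exactly where the hypothesis on $h_0(x)\, x$ is designed to enter.

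The substantive step is to feed the $(\theta,\delta)$-palindromic identity $g_i = \theta(g_{m-i}) - \delta(g_{m-i+1})$ for $i \in \{1,\ldots,m\}$ into $h_0(x)$, using $g_m = 1$, together with $\theta^2 = \mathrm{id}$ and the commutation $\delta \circ \theta = \theta \circ \delta$ from Section~\ref{sec2}. The coefficients of $h_0(x)$ can then be matched against those of $\theta^j(g(x))$ up to $\delta$-adjustments that telescope against the stray $\delta$-blocks generated during the push-through, leaving $c^r$ expressed as a combination of terms of the form $g(x) \cdot (\text{monomial})$, which lie in $\mathcal{C} = \langle g(x) \rangle$ automatically, plus scalar right multiples of $h_0(x)\, x$, which lie in $\mathcal{C}$ by hypothesis; hence $c^r \in \mathcal{C}$. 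The main obstacle is the careful bookkeeping of parity: for odd $j$ the $\delta$-correction arises with exponent $j-1$, and after reversal it lands at $n-j$ rather than $n-1-j$, so its reassembly into multiples of $h_0(x)\, x$ (rather than $h_0(x)$) explains why the hypothesis must feature the extra factor of $x$, and verifying this matching for both parities of $j$ is precisely where the $(\theta,\delta)$-palindromic form is actually used.
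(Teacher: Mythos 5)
Your overall skeleton does match the paper's proof: expand $c=\sum_{j}a_jx^jg(x)$ with the commutation rule, reverse exponents $x^s\mapsto x^{n-1-s}$, regroup by the parity of $j$, absorb the $\delta$-corrections with the $(\theta,\delta)$-palindromic identity, and invoke the hypothesis on $h_0(x)x$, where $h_0(x)=\sum_{i=0}^m g_i x^{m-i}$. But two points in your plan are wrong as stated. First, you attach the hypothesis to the wrong blocks: you claim that the $\delta$-correction blocks, being shifted up one power after reversal, ``reassemble into multiples of $h_0(x)\,x$'' and that this is why the extra factor $x$ is needed. Executed literally this fails, because the $\delta$-blocks have coefficients $\delta(g_i)$, not $g_i$, so they are not multiples of $h_0(x)$ at all. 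In the paper's argument they are merged with the odd-$j$ $\theta$-blocks (shift the index $i\mapsto i+1$), and the $(\theta,\delta)$-palindromic identity, in the form $\theta(g_i)+\delta(g_{i+1})=g_{m-i}$ over characteristic $2$, turns each combined odd-$j$ block into $g(x)\,x^{k-2-2j}$. The hypothesis $h_0(x)x\in\mathcal{C}$ is needed for the \emph{even}-$j$ blocks, which carry no $\delta$-correction whatsoever: they come out as $h_0(x)\,x^{k-1-2j}$, and since $k=n-m$ is even the exponent $k-1-2j$ is odd; that parity mismatch, not the $\delta$-shift, is the source of the extra factor $x$.

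Second, your membership argument is unjustified exactly where it matters. $\mathcal{C}=\langle g(x)\rangle$ is a \emph{left} $\mathbb{F}_4[x;\theta,\delta]$-submodule, so a right multiple $g(x)\cdot x^s$ is not ``automatically'' in $\mathcal{C}$, nor does the hypothesis $h_0(x)x\in\mathcal{C}$ by itself cover $h_0(x)x$ followed by right multiplication by monomials. The step that rescues both, and which the paper uses implicitly when it slides $x^{k-2-2j}$ to the left, is the Corollary's fact that even powers of $x$ commute with all coefficients: $g(x)x^{2s}=x^{2s}g(x)$ and $h_0(x)x^{2s+1}=x^{2s}\bigl(h_0(x)x\bigr)$. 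Because $k$ is even, every residual monomial has the right parity, so each block becomes a left multiple of $g(x)$ or of $h_0(x)x$ and hence lies in $\mathcal{C}$. This is precisely the parity bookkeeping you identify as the main obstacle, yet the proposal never states the commutation fact that resolves it and instead asserts the conclusion; until the roles of the two conditions are corrected and this left/right issue is argued, the proof does not go through.
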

\begin{proof}
 Let $\mathcal{C} = \langle g(x) \rangle$ be a $(\theta, \delta)$-cyclic code of even length $n$, where $g(x)=\sum_{i=0}^{m}g_i x^i$ such that $m$ is even. Let $c \in \mathcal{C}$ be a codeword. Then $c = \sum_{j=0}^{k-1}a_jx^jg(x)$, where $a_j$'s $\in \mathbb{F}_4$.
    \begin{align*}
        c = & \sum_{j=0}^{k-1} a_j x^j \sum_{i=0}^{m}g_i x^i\\
       = & \sum_{j=0}^{k-1} a_j \sum_{i=0}^{m}\theta^j(g_i) x^{i+j}+ \sum_{j=0}^{(k-2)/2} a_{2j+1} \sum_{i=0}^{m}\delta(g_i) x^{i+2j}
       \end{align*}
Now, the reverse of the codeword will be
       \begin{align*}
       c^r = & \sum_{j=0}^{k-1} a_j \sum_{i=0}^{m}\theta^j(g_i) x^{n-1-(i+j)}+ \sum_{j=0}^{(k-2)/2} a_{2j+1}
      \sum_{i=0}^{m}\delta(g_i) x^{n-1-(i+2j)}\\
      % = & \sum_{j=0}^{k-1} a_j \sum_{i=0}^{m}\theta^j(g_i) x^{n-1-(i+j)}+ \sum_{j=0}^{(k-2)/2} a_{2j+1} \\
      % &\sum_{i=0}^{m}\delta(g_i) x^{n-1-(i+2j)}\\
     %%%%%%%%%%
     = & \sum_{j=0}^{(k-2)/2} a_{2j} \left[\sum_{i=0}^{m}g_i x^{m-i}\right]x^{k-1-2j}
    + \sum_{j=0}^{(k-2)/2} a_{2j+1} \Bigg[\bigg(\sum_{i=0}^{m-1}\Big(\theta(g_i)+\delta(g_{i+1})\Big)\\
      & x^{m-i} +\theta (g_m)\Big)x^{k-2-2j}\Bigg].
      %%%%%%%
    \end{align*}
    Now, using the $(\theta, \delta)$-palindromic condition of the polynomial $g(x)$, we get
    \begin{align*}
      c^r  = & \sum_{j=0}^{(k-2)/2} a_{2j} \left[\sum_{i=0}^{m}g_i x^{m-i}\right]x^{k-1-2j}
     + \sum_{j=0}^{(k-2)/2} a_{2j+1} \Bigg[\bigg(\sum_{i=0}^{m}g_{m-i} x^{m-i}\Big)x^{k-2-2j}\Bigg]\\
      = & \sum_{j=0}^{(k-2)/2} a_{2j} x^{k-2-2j}\left[\sum_{i=0}^{m}g_i x^{m-i}\right]x
     + \sum_{j=0}^{(k-2)/2} a_{2j+1} x^{k-2-2j} \Bigg[\bigg(\sum_{i=0}^{m}g_{m} x^{m}\bigg)\Bigg]\\
     = & \sum_{j=0}^{(k-2)/2} a_{2j} x^{k-2-2j}\left[\sum_{i=0}^{m}g_i x^{m-i}\right]x
     + \sum_{j=0}^{(k-2)/2} a_{2j+1} x^{k-2-2j} g(x).
    \end{align*}
    If $\Big( \sum_{i=0}^{m} g_i x^{m-i} \Big)x \in \mathcal{C}$, then  $c^r\in \mathcal{C}$. Thus, the code $\mathcal{C}$ is reversible.
\end{proof}

In this part, we discuss the complementary condition for reversible codes. A code is a DNA code if it satisfies both reversible and complement conditions. In this work, we identify the DNA nucleotides by the field elements using an identification: $1\to A,~ t^2 \to T,~ 0 \to G$ and $t \to C$.\\
Now, in the following lemma, we give a relation between the alphabets and their complement using the above identification.
	
% \begin{lemma} For given cyclic code in Section \ref{rev}, the following conditions hold:
% 	\begin{itemize}
% 		\item[(1)] For any $r\in \mathbb{F}_4,\ r^c=r+1.$
%         \item[(2)] For any $r_1, r_2\in \mathbb{F}_4,\ r_1^c+r_2^c=(r_1+r_2)^c+1$.
%         \item[(3)] For $(a,r)\in \mathbb{F}_4\mathbb{F}_4,~ (a,r)^c=(a,r)+(1,1).$
% 	\end{itemize}
% \end{lemma}
% \begin{proof} This lemma can easily be proved by observing Table $1$ and $2$.
% \end{proof}

\begin{lemma} For given $(\theta,\delta)$-cyclic code in Section \ref{rev}, the following conditions hold:
	\begin{itemize}
		\item[(1)] For any $r\in \mathbb{F}_4,\ r^c=r+t.$
        \item[(2)] For any $r_1, r_2\in \mathbb{F}_4,\ r_1^c+r_2^c=(r_1+r_2)^c+t$.
	\end{itemize}
\end{lemma}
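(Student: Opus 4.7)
The plan is to verify both parts by direct computation, exploiting the fact that $\mathbb{F}_4=\mathbb{F}_2(t)$ has characteristic $2$ and $t$ satisfies the minimal relation $t^2+t+1=0$, i.e.\ $t^2=1+t$.

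For part (1), I would simply check the identity $r^c=r+t$ on each of the four elements of $\mathbb{F}_4$ using the identification $1\leftrightarrow A,\ t^2\leftrightarrow T,\ 0\leftrightarrow G,\ t\leftrightarrow C$ and the Watson--Crick pairing $A\leftrightarrow T,\ G\leftrightarrow C$. Concretely, $0^c$ corresponds to $C=t$ and indeed $0+t=t$; $t^c$ corresponds to $G=0$ and $t+t=0$ by characteristic $2$; $1^c$ corresponds to $T=t^2$ and $1+t=t^2$ by the defining relation of $\mathbb{F}_4$; finally $(t^2)^c$ corresponds to $A=1$ and $t^2+t=(1+t)+t=1$. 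This exhausts all four cases.

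For part (2), I would use part (1) twice. Substituting $r_i^c=r_i+t$ gives
\begin{equation*}
r_1^c+r_2^c=(r_1+t)+(r_2+t)=r_1+r_2+2t=r_1+r_2,
\end{equation*}
since $2t=0$ in characteristic $2$. On the other hand, again by part (1),
\begin{equation*}
(r_1+r_2)^c+t=\bigl((r_1+r_2)+t\bigr)+t=r_1+r_2+2t=r_1+r_2.
\end{equation*}
Comparing the two expressions yields the claim.

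There is no real obstacle here: the only subtlety is keeping track of the arithmetic of $\mathbb{F}_4$, in particular the identity $t^2=1+t$ used in the $r=1$ and $r=t^2$ cases of part (1), and the collapse $2t=0$ used in part (2). Both facts are immediate from the characteristic and minimal polynomial.
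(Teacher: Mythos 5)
Your proof is correct and follows the same route the paper intends: a direct check of the Watson--Crick correspondence $1\leftrightarrow A,\ t^2\leftrightarrow T,\ 0\leftrightarrow G,\ t\leftrightarrow C$ element by element, followed by characteristic-$2$ arithmetic for part (2). The paper merely states that the lemma follows "by observing the correspondence of DNA nucleotides," so your write-up is simply the fully detailed version of the same argument.
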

\begin{proof} This lemma can easily be proved by observing the correspondence of DNA nucleotides.
\end{proof}

\begin{remark}
We identify $\frak{i}_n(x)$ by the polynomial $1+x+x^2+\dots+x^{n}$.
\end{remark}

% \begin{theorem} Given a polynomial $a(x)$ of degree $n$ in $\mathbb{F}_4[x]$. Then $$a(x)^{rc}=a(x)^r+ \frak{i}_{n}(x).$$
% \end{theorem}
% \begin{proof} Let $a(x)=a_0+a_1x+\cdots + a_{n-1}x^{n-1}+x^n$ be a polynomial of degree $n$ in $\mathbb{F}_4[x]$ where $a_0$ is a non-zero element of $\mathbb{F}_4$. Then
% \begin{align*}
% a(x)^{rc}&=a_n^c+a_{n-1}^cx+\cdots+a_1^cx^{n-1}+a_0^cx^{n}\\
% &=a_n+1+(a_{n-1}+1)x+(a_{n-2}+1)x^2+\cdots +\\
% &(a_1+1)x^{n-1}+(a_0+1)x^n\\
% &=\frak{i}_n(x)+a(x)^r.
% \end{align*}
% \end{proof}

\begin{theorem} Given a polynomial $a(x)$ of degree $n$ in $\mathbb{F}_4[x]$. Then $$a(x)^{rc}=a(x)^r+ t \frak{i}_{n}(x).$$
\end{theorem}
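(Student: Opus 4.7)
The plan is to unfold the definitions of the reverse and complement operations coefficient-wise and then use the identity $r^c = r+t$ from the preceding lemma to collapse the expression. Write $a(x) = \sum_{i=0}^{n} a_i x^i$ with $a_i \in \mathbb{F}_4$, so that by definition $a(x)^r = \sum_{i=0}^{n} a_{n-i}\, x^i$. Applying the complement map componentwise to this reversed polynomial gives $a(x)^{rc} = \sum_{i=0}^{n} (a_{n-i})^c\, x^i$; there is nothing subtle here because complement and reversal both act on the coefficients rather than on the indeterminate.

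Next I would invoke part (1) of the lemma, which says $r^c = r + t$ for every $r \in \mathbb{F}_4$. Substituting this into the sum yields
\begin{align*}
a(x)^{rc} = \sum_{i=0}^{n} (a_{n-i} + t)\, x^i = \sum_{i=0}^{n} a_{n-i}\, x^i + t \sum_{i=0}^{n} x^i = a(x)^r + t\, \mathfrak{i}_n(x),
\end{align*}
which is exactly the claimed identity.

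There is no real obstacle in this argument; the only points to double-check are that the degree of $a(x)$ is genuinely $n$ (so that the reverse has the same length and the sum $\mathfrak{i}_n(x) = 1 + x + \cdots + x^n$ has the correct number of terms), and that the complement operation on polynomials is indeed defined by applying $(\cdot)^c$ to each coefficient individually, consistent with the identification of field elements with DNA nucleotides introduced just before the lemma. Once these conventions are spelled out, the statement follows from a one-line coefficient computation, so the proof is essentially an accounting exercise rather than a substantive derivation.
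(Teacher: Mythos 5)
Your proposal is correct and follows essentially the same route as the paper: write the reversed-and-complemented polynomial coefficient-wise, apply the lemma $r^c = r + t$ to each coefficient, and split off the constant $t$ contributions to obtain $a(x)^r + t\,\frak{i}_n(x)$. The only difference is notational (summation indices versus the paper's explicit listing of terms), so no further comment is needed.
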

\begin{proof} Let $a(x)=a_0+a_1x+\cdots + a_{n-1}x^{n-1}+x^n$ be a polynomial of degree $n$ in $\mathbb{F}_4[x]$ where $a_0$ is a non-zero element of $\mathbb{F}_4$. Then
\begin{align*}
a(x)^{rc}=&a_n^c+a_{n-1}^cx+\cdots+a_1^cx^{n-1}+a_0^cx^{n}\\
=&(a_n+t)+(a_{n-1}+t)x+(a_{n-2}+t)x^2+\cdots +(a_1+t)x^{n-1}+(a_0+t)x^n\\
=&a(x)^r+t \frak{i}_n(x).
\end{align*}

\end{proof}

% \begin{proof} Let $\mathcal{C}$ be a reversible-complement code. Then by definition $\mathcal{C}$ is reversible and $0 \in \mathcal{C}$ implies that $(0+0x+\cdots+0x^{n-1})^c \in \mathcal{C}$. That is, $\mathcal{C}$ is reversible and $v^2+v^2x+\cdots+v^2x^{n-1} \in \mathcal{C}$.\\
% Conversely, let $g(x)=g_0+g_1x+\cdots+g_{n-1}x^{n-1}+x^n$ be a polynomial in $\mathbb{F}_4[x]$. Then
% \begin{align*}
% g(x)^{rc}&=g_n^c+g_{n-1}^cx+\cdots+g_1^cx^{n-1}+g_0^cx^{n}\\
% &=g_n+v^2+(g_{n-1}+v^2)x+(g_{n-2}+v^2)x^2+\cdots\\
% &+(g_1+v^2)x^{n-1}+(g_0+v^2)x^n\\
% &=v^2\frak{i}(x)+g(x)^r \in \mathcal{C}.
% \end{align*}
% Thus, cyclic code $\mathcal{C}$ is a reversible-complement code.
% \end{proof}

% The following corollaries are obvious from the above theorems.
% \begin{corollary} Let $\mathcal{C}$ be a cyclic code of length $m$ over $\mathbb{F}_4$. If code $\mathcal{C}$ is reversible and all-ones vector ( that is, the value of each entry is $1$) is in $\mathcal{C}$, then $\mathcal{C}$ is a DNA code.\end{corollary}
% \begin{corollary} Let $\mathcal{C}$ be an additive code of block length $(n,m)$ over  $\mathbb{F}_4\mathbb{F}_4$. If code $\mathcal{C}$ is reversible and all-ones vector is in $\mathcal{C}$, then $\mathcal{C}$ is a DNA code.\end{corollary}

The following corollary is obvious from the above theorems.
\begin{corollary}
Let $\mathcal{C}$ be a cyclic code of length $m$ over $\mathbb{F}_4$. If the code $\mathcal{C}$ is reversible and all-$t$ vector (that is, the value of each entry is $t$) is in $\mathcal{C}$, then $\mathcal{C}$ is a DNA code.
\end{corollary}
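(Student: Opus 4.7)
The plan is to verify the complement-closure of $\mathcal{C}$, since the reversibility hypothesis already furnishes the other half of the definition of a DNA code. The key observation is the preceding lemma $r^c=r+t$ for every $r\in\mathbb{F}_4$, which says that taking the complement of a codeword is the same as adding the all-$t$ vector of length $m$ to it.

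First I would note that, in polynomial form, the all-$t$ vector of length $m$ is the polynomial $t\,\frak{i}_{m-1}(x)=t+tx+\cdots+tx^{m-1}$, which by hypothesis lies in $\mathcal{C}$. Then for any codeword $c(x)\in\mathcal{C}$, componentwise application of $r\mapsto r+t$ gives $c(x)^{c}=c(x)+t\,\frak{i}_{m-1}(x)$. Both summands lie in $\mathcal{C}$, and since $\mathcal{C}$ is linear, $c(x)^{c}\in\mathcal{C}$. Combined with the given reversibility this establishes both conditions in the paper's definition of a DNA code.

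Alternatively, one can route the argument through the preceding theorem $a(x)^{rc}=a(x)^{r}+t\,\frak{i}_{n}(x)$: reversibility places $c^{r}$ in $\mathcal{C}$, the all-$t$ polynomial is in $\mathcal{C}$ by hypothesis, and linearity yields $c^{rc}\in\mathcal{C}$; one more application of reversibility recovers $c^{c}\in\mathcal{C}$. Either route delivers the conclusion.

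There is essentially no real obstacle here; the corollary is a one-step linearity consequence of bundling the lemma $r^c=r+t$, the theorem on $a(x)^{rc}$, and the standing hypothesis that the all-$t$ vector belongs to $\mathcal{C}$. The only minor bookkeeping point is to identify the "degree $n$" appearing in the theorem with $m-1$, padding shorter codewords with zero coefficients so that complementation acts on all $m$ positions; once this is done, the conclusion is immediate.
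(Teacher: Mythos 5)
Your argument is correct and matches the paper's intent exactly: the paper merely remarks that the corollary is ``obvious from the above theorems,'' and your spelled-out reasoning (complementation equals addition of the all-$t$ vector by the lemma $r^c=r+t$, so linearity plus the hypothesis gives complement closure, with reversibility supplying the other condition) is precisely that argument made explicit. Your remark about identifying the theorem's degree $n$ with $m-1$ and padding with zeros is a sensible bookkeeping note and does not change the approach.
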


\begin{table}
\begin{center}\caption{ DNA codes from skew cyclic codes of length $n$ over $\mathbb{F}_{p^m}$}
\label{DNA codes}
\begin{tabular}{|c|c|c|c|}
		%	\caption{}
	\hline
$(n,p^m)$ & $g(x)$&  $\Phi(\mathcal{C})$\\
  \hline
$(12,4)$ & $x^5 + t^2x^3 + t^2x^2 + 1$ & $[12, 7, 4]$  \\
	\hline
$(18,4)$ & $x^{11} + tx^9 + t^2x^8 + x^7 + x^4 + t^2x^3 + tx^2 + 1$ & $[18, 7, 8]$  \\
	\hline
 $(20,4)$ & $x^7 + t^2x^6 + tx^5 + tx^2 + t^2x + 1$ & $[20, 13, 4]$  \\
	\hline
  $(20,4)$ & $x^9 + t^2x^8 + x^7 + tx^6 + t^2x^5 + t^2x^4 + tx^3 + x^2 + t^2x + 1$ & $[20, 11, 6]$  \\
	\hline
  $(20,4)$ & $x^{11} + tx^{10} + x^9 + t^2x^7 + x^6 + x^5 + t^2x^4 + x^2 + tx + 1$ & $[20, 9, 8]*$  \\
	\hline
 $(22,4)$ & $x^{11} + x^8 + x^7 + tx^6 + tx^5 + x^4 + x^3 + 1$ & $[22, 11, 8]*$  \\
	\hline
  $(24,4)$ & $x^7 + t^2x^6 + t^2x^5 + tx^4 + tx^3 + t^2x^2 + t^2x + 1$ & $[24, 17, 4]$  \\
	\hline
   $(24,4)$ & $x^9 + t^2x^8 + x^7 + tx^5 + tx^4 + x^2 + t^2x + 1$ & $[24, 15, 6]$  \\
	\hline
   $(28,4)$ & $x^{13} + x^{12} + x^{11} + x^9 + t^2x^7 + t^2x^6 + x^4 + x^2 + x + 1$ & $[28, 15, 8]$  \\
	\hline
  $(30,4)$ & $x^7 + t^2x^6 + x^5 + tx^4 + tx^3 + x^2 + t^2x + 1$ & $[30, 23, 4]$  \\
	\hline

 $(30,4)$ & $x^{13} + tx^{12} + x^{11} + t^2x^{10} + tx^8 + x^7 + x^6 + tx^5 + t^2x^3 + x^2 +
    tx + 1$ & $[30, 17, 8]*$  \\
	\hline

\end{tabular}\caption*{$`*'$ represents the optimal codes}
\end{center}
\end{table}

 \section{Computational Results}
In this section, we provide some examples and tables for classical codes and DNA codes to support our study. Here we see that $(\theta,\delta)$-cyclic codes have better parameters than the cyclic and $\theta$-cyclic codes. Further, by taking the ring $\mathfrak{R}_l$, we can choose any number of copies to overcome the restriction on the length of the codes. Following are some examples of classical and DNA $(\sigma, \delta)$-codes over corresponding rings.
It's essential to note that the computations have been performed using Magma \cite{magma} and Sagemath \cite{sage}.
 \begin{example}
Let $\mathbb{F}_{2^4}=\mathbb{F}_{2}(t)$ be the field of order $16$. We define $\sigma \in Aut(\mathfrak{R}_2)$ as $\sigma(a+bv)=a^{2}+b^{2}v$ and let $\delta(a) =t(\sigma(a)-a)$ for some $a\in \mathfrak{R}_2$.
	Let $\mathcal{C}$ be a $(\theta, \delta)$-cyclic code of length $12$ over $\mathfrak{R}_2$ generated by $f(x)=\mu_1 f_1(x)+\mu_2 f_2(x)$, where $f_1(x)=x^4+t^{13}x^3 + t^{7}x^2 + t$ and $f_2(x)=x^3+t^{10}x^2 + t^{11}x + t^{14}$ are right divisors of $x^{12}-1$ in $\mathbb{F}_{16}[x;\theta,\delta]$. The factorization of $x^{12}-1$ in terms of $f_1(x)$ and $f_2(x)$ are given by
 \begin{align*}
     x^{12}-1=& (x^8 + t^{13}x^7 + t^2x^6 + t^7x^5 + t^2x^4 + t^{14}x^3 + t^5x^2 + t^6x + t^{11})(x^4 + t^{13}x^3 + t^7x^2 + t)\text{ and}\\
     x^{12}-1=& (x^9 + t^5x^8 + t^3x^7 + t^{10}x^5 + t^{14}x^4 + x^2 + x + 1)(x^3 + t^{10}x^2 + t^{11}x + t^{14})
 \end{align*}

	 Again, take  \[
	N=
	\left( \begin{array}{cc}
		1&t \\
		t&1
		\end{array}  \right)\in GL_{2}(\mathbb{F}_{2^4}),
	\] satisfying $NN^T=tI_2$. Then the Gray image $\Phi({\mathcal{C}})$ is a $(\theta, \delta)$-cyclic code with the parameters $[24,17,6]_{16}$ which is better than the code $[24,16,6]_{16}$\cite{shikha}.
\end{example}
 \begin{example}
Let $\mathbb{F}_{5^2}=\mathbb{F}_{5}(t)$ be the field of order $25$. We define $\sigma \in Aut(\mathfrak{R}_3)$ as $\sigma(a+bv+cv^2)=a^{5}+b^{5}v+c^5v^2$ and let $\delta(a) =t(\sigma(a)-a)$ for some $a\in \mathfrak{R}_3$.
	Let $\mathcal{C}$ be a $(\theta, \delta)$-cyclic code of length $15$ over $\mathfrak{R}_3$ generated by $f(x)=\mu_1 f_1(x)+\mu_2 f_2(x)+\mu_3 f_3(x)$, where $f_1(x)=x^3 + t^7x^2 + t^{22}x + t^9,~ f_2(x)=x + t^{11}$ and $f_3(x)=x + 4$ are right divisors of $x^{15}-1$ in $\mathbb{F}_{25}[x;\theta,\delta]$. The factorization of $x^{15}-1$ in terms of $f_1(x)$, $f_2(x)$ and $f_3(x)$ are given by
 \begin{align*}
     x^{15}-1=& (x^{12} + t^{19}x^{11} + 4x^{10} + t^{13}x^9 + t^{23}x^8 + t^8x^7 + t^{10}x^6 + t^7x^5 +
    t^{11}x^4 + t^{10}x^3 + t^{14}x^2 + t^2x\\& + t^5)(x^3 + t^7x^2 + t^{22}x + t^9),\\
    x^{15}-1=&(x^{14} + t^{23}x^{13} + t^{20}x^{12} + 4x^{11} + t^{21}x^{10} + t^9x^9 + tx^8 + t^{15}x^7 +
    t^{20}x^6 + t^{22}x^5 + t^{17}x^4 + t^{13}x^3\\& + 4x^2 + t^{15}x + t^5)(x + t^{11}) \text{ and}\\
    x^{15}-1=& (x^{14} + x^{13} + x^{12} + x^{11} + x^{10} + x^9 + x^8 + x^7 + x^6 + x^5 + x^4 + x^3 + x^2 + x + 1)(x + 4).
 \end{align*}
	 Again, take  \[
	N=
	\left( \begin{array}{ccc}
		t^{11}&4&t^{14} \\
		t^{17}&t^{17}&1 \\
            t^{10}& t^{17}&t^{23}
		\end{array}  \right)\in GL_{3}(\mathbb{F}_{25}),
	\] satisfying $NN^T=t^{20}I_3$. Then the Gray image $\Phi({\mathcal{C}})$ is an $(\theta, \delta)$-cyclic code with the parameters $[45, 40, 4]_{25}$.
\end{example}

 \begin{example}
Let $\mathbb{F}_{2^2}=\mathbb{F}_{2}(t)$ be the field of order $4$. We define $\theta \in Aut(\mathbb{F}_{2^2})$ as $\theta(a)=a^{2}$ and let $\delta(a) =t(\theta(a)-a)$ for some $a\in \mathbb{F}_{2^2}$.
	Let $\mathcal{C}$ be a $(\theta, \delta)$-cyclic code of length $30$ over $\mathbb{F}_{2^2}$ generated by $f(x)=x^{13} + tx^{12} + x^{11} + t^2x^{10} + tx^8 + x^7 + x^6 + tx^5 + t^2x^3 + x^2 +
    tx + 1$, where $f(x)$ is a right divisor of $x^{30}-1$ in $\mathbb{F}_{4}[x;\theta,\delta]$. The factorization of $x^{30}-1$ in terms of $f(x)$ is given by
 \begin{align*}
     x^{30}-1=& (x^{17} + t^2x^{16} + x^{14} + tx^{12} + t^2x^9 + tx^8 + t^2x^5 + x^3 + tx + 1)(x^{13} + tx^{12} + x^{11} + t^2x^{10}\\& + tx^8 + x^7 + x^6 + tx^5 + t^2x^3 + x^2 + tx + 1)
 \end{align*}

 Then the code $\mathcal{C}$ is a $(\theta, \delta)$-cyclic code with the parameters $[30,17,8]_{4}$ which is optimal according to \cite{Grassl}. Further, the polynomial $f(x)$ is a palindromic polynomial of odd degree. Hence, the code is reversible.
\end{example}

 \begin{example}
Let $\mathbb{F}_{2^2}=\mathbb{F}_{2}(t)$ be the field of order $4$. We define $\theta \in Aut(\mathbb{F}_{2^2})$ as $\theta(a)=a^{2}$ and let $\delta(a) =t(\theta(a)-a)$ for some $a\in \mathbb{F}_{2^2}$.
	Let $\mathcal{C}$ be a $(\theta, \delta)$-cyclic code of length $30$ over $\mathbb{F}_{2^2}$ generated by $f(x)=x^9 + t^2x^8 + t^2x^7 + x^6 + x^3 + t^2x^2 + t^2x + 1$, where $f(x)$ is a right divisor of $x^{12}-1$ in $\mathbb{F}_{4}[x;\theta,\delta]$. The factorization of $x^{12}-1$ in terms of $f(x)$ is given by
 \begin{align*}
     x^{12}-1=& (x^3 + tx^2 + t^2x + 1)(x^9 + t^2x^8 + t^2x^7 + x^6 + x^3 + t^2x^2 + t^2x + 1)
 \end{align*}

 Then the code $\mathcal{C}$ is a $(\theta, \delta)$-cyclic code with the parameters $[12,3,6]_{4}$. Further, the code $\mathcal{C}$ is a DNA code with codewords given in the Table \ref{DNA codewords}.
\end{example}

	\begin{table}
 \begin{center}
		\vspace{0.5cm}	
		\caption{DNA codewords from $(\sigma,\delta)$-cyclic codes}

  \label{DNA codewords}
		\begin{tabular}{|c|c|c|c|}
			
			\hline
		CCTAATCCTAAT & TTACCATTACCA & GGATTAGGATTA & TGAGCGTGAGCG \\

CTCACGCTCACG & ACGGCAACGGCA & CTTCAACTTCAA & CCCCCCCCCCCC \\

CGTGAGCGTGAG & GCCGAAGCCGAA & CGCTCACGCTCA & ATTAGGATTAGG \\

CCAGGACCAGGA & ACTCGCACTCGC & AGTTGAAGTTGA & CTGGTCCTGGTC \\

CAGATACAGATA & AACTTCAACTTC &CGGCTTCGGCTT &ATAGACATAGAC \\

TAATCCTAATCC &ATGTCTATGTCT &AAAAAAAAAAAA &AGGACCAGGACC \\

CATTACCATTAC &AGACATAGACAT &GTCTATGTCTAT &TTCGGCTTCGGC \\

GGCAACGGCAAC &AAGCCGAAGCCG &CTATGTCTATGT &TACAGATACAGA \\

GACCAGGACCAG &CGAAGCCGAAGC &GCCGTTGCCGT &GCGAGTGCGAGT \\

GTTGCCGTTGCC &CAACGGCAACGG &GTGCGAGTGCGA &GATACAGATACA \\

GGGGGGGGGGGG &GGTCCTGGTCCT &TAGGATTAGGAT &CCGTTGCCGTTG \\

TCTGTATCTGTA &GAAGTTGAAGTT &ACATAGACATAG &CACGCTCACGCT \\

TCCTGGTCCTGG &GAGTGCGAGTGC &AATGGTAATGGT &GCTTCGGCTTCG \\

TTGAAGTTGAAG &TTTTTTTTTTTT &TCGCACTCGCAC &TGTATCTGTATC \\

ACCATTACCATT &TGGTAATGGTAA &TATCTGTATCTG &ATCCTAATCCTA \\

GTAATGGTAATG &TCAACTTCAACT &AGCGTGAGCGTG &GCACTCGCACTC\\
			\hline
			
		\end{tabular}\label{tab2}
		 \end{center}
	\end{table}

\section{Conclusion}
In this paper, we have studied the $(\sigma,\delta)$-cyclic codes over $\mathfrak{R}_l=\mathbb{F}_{p^m}/\langle v^l-v\rangle$ where $l$ and $m$ are positive integers with $l> 1$. Further, we have explored the application of these codes in obtaining DNA codes over $\mathbb{F}_4$. First, we defined a Gray map to find correspondence between the elements of the ring and the pairs of nucleotides. Further, we presented the conditions for a code to be reversible and the DNA code. Additionally, we have constructed several DNA codes and classical codes of better parameters provided in Tables [\ref{overfield}, \ref{2copies}, \ref{Tbl3}, \ref{DNA codes}].

\section*{Acknowledgement}
	The authors are thankful for financial support to the Council of Scientific \& Industrial Research (CSIR), Govt. of India under File No. 09/1023(0027)/2019- EMR-1, the Prime Minister Research Fellows Scheme (PMRF-ID:
2702443), Ministry of Education, Govt. of India  and Indian Institute of Technology Patna for providing research facilities.
\section*{Declarations}
\textbf{Data Availability Statement}: The authors declare that [the/all other] data supporting the findings of this study are available within the article. Any clarification may be requested from the corresponding author.  \\
\textbf{Competing interests}: The authors declare that there is no conflict of interest regarding the publication of this manuscript.\\
\textbf{Use of AI tools declaration}
The authors declare that they have not used Artificial Intelligence (AI) tools to create this article.


\begin{thebibliography}{99}
%\bibliographystyle{IEEEtran}
\bibitem{Adleman} L. M. Adleman, Molecular computation of solutions to combinatorial problems. Science \textbf{266}(5187), 1021-1024 (1994)

   \bibitem{MKGupta} D. Limbachiya, B. Rao, M. K. Gupta, The art of DNA strings: Sixteen years of DNA coding theory, arXiv preprint arXiv:1607.00266, (2016)

   \bibitem{Mac} F. J. MacWilliams, N. J. A. Sloane, The Theory of Error-Correcting Codes. North-Holland Publishing Co., (1977)

   \bibitem{CodeRing} M. Shi, A. Alahmadi, P. Sol\'e, Codes and Rings: Theory and Practice, Pure and Applied Mathematics Elsevier Science, (2017)
   \bibitem{ore} O. Ore, Theory of non-commutative polynomials. Ann. of Math. \textbf{34}, 480-508 (1933)

   \bibitem{HammonsR} A. R. Hammons Jr., P. V. Kumar, A. R. Calderbank, N. J. A. Sloane, P. Solé, The $\mathbb{Z}_4$-linearity of Kerdock, Preparata, Goethals and related codes. IEEE Trans. Inf. Theory \textbf{40}, 301–319(1994)

    \bibitem{mostafanasab} H. Mostafanasab, A. Y.  Darani, On Cyclic DNA Codes Over ${F} _2+ u {F} _2+ u^2 {F} _2 $. Commun. Math. Stat. \textbf{9}, 39–52 (2021)

   \bibitem{non-chain siap} B. Yildiz, I. Siap, Cyclic codes over $F_2[u]/(u^4-1)$ and applications to DNA codes.  Comput. Math. Appl. \textbf{63}(7), 1169-1176 (2012)

\bibitem{abualrub} T. Abualrub, I. Siap, Cyclic codes over the rings $Z_2+ uZ_2$ and $Z_2+ uZ_2+ u^2 Z_2$. Des. Codes Cryptogr. \textbf{42}(3), 273-287 (2007)

\bibitem{bayram} A. Bayram, E. S. Oztas, I. Siap, Codes over $F_4+vF_4$ and some DNA applications. Des. Codes Cryptogr. \textbf{80}, 379-393 (2016)

\bibitem{bennenni} N. Bennenni, K. Guenda, S. Mesnager, DNA cyclic codes over rings. Adv. Math. Commun. \textbf{11}(1), 83 (2017)

\bibitem{guenda} K. Guenda, T. A. Gulliver, Construction of cyclic codes over $\mathbb{F}_2+ u\mathbb{ F}_2 $ for DNA computing. Appl. Algebra Engrg. Comm. Comput. \textbf{24}(6), 445-459 (2013)


\bibitem{Boucher07} D. Boucher, W. Geiselmann, F. Ulmer, Skew cyclic codes. Appl. Algebra Eng. Comm. \textbf{18}, 379-389 (2007)


\bibitem{D09} D. Boucher, F. Ulmer, Coding with skew polynomial rings. J. Symb. Comput. \textbf{44}(12), 1644-1656 (2009)

\bibitem{leroy} M. Boulagouaz, A. Leroy, ($\sigma,\delta$)-codes. Adv. Math. Commun. \textbf{7}(4), 463-474 (2013)

\bibitem{gursoy} F. Gursoy, E. S. Oztas, I. Siap, Reversible DNA codes using skew polynomial rings. Appl. Algebra Engrg. Comm. Comput. \textbf{28}(4), 311-320 (2017)

\bibitem{Ashu} O. Prakash, A. Singh, R. K. Verma, P. Solé, W. Cheng, DNA Code from Cyclic and Skew Cyclic Codes over $F_4[v]/\langle v^3\rangle$. Entropy \textbf{25}(2), 239 (2023)

\bibitem{Verma} R. K. Verma, O. Prakash, H. Islam, A. Singh, New non-binary quantum codes from skew constacyclic and additive skew constacyclic codes. Eur. Phys. J. Plus \textbf{137}(2), 1-13 (2022)


\bibitem{shikha2} S. Patel, O. Prakash, $(\theta,\delta _\theta) $-Cyclic codes over $\mathbb {F} _q [u, v]/\langle u^ 2-u, v^ 2-v, uv-vu\rangle $. Des. Codes Cryptogr. \textbf{90}(11), 2763-2781 (2022)

\bibitem{Grassl} M. Grassl,  Code Tables: Bounds on the parameters of various types of codes available at http://www.codetables.de/	 accessed on 28/11/2023.

\bibitem{shikha} S. Patel, O. Prakash, H. Islam, Cyclic codes over $M_4(F_2+ u F_2)$. Cryptogr. Commun. \textbf{14}(5), 1021-1034 (2022)

\bibitem{sage} SageMath, the Sage Mathematics Software System (Version x.y.z), The Sage Developers, YYYY, https://www.sagemath.org.
\bibitem{magma} W. Bosma, J. Cannon, C. Playoust, The Magma algebra system I: The user language. J. Symbolic Comput. \textbf{24}, 235-265 (1997)
\bibitem{Massey} J. L. Massey, Reversible codes. Inf. Control, 7(3), 369-380 (1964)

\bibitem{siap11skew} I. Siap, T. Abualrub, N. Aydin, P. Seneviratne, Skew cyclic codes of arbitrary length. Int. J. Inf. Coding Theory \textbf{2}(1), 10-20 (2011)



\bibitem{Calderbank} A. R. Calderbank, The art of signaling: fifty years of coding theory. IEEE Trans. Inf. Theory \textbf{44}(6), 2561-2595 (1998)

% \bibitem{dinh} Dinh, H. Q., Singh, A. K., Pattanayak, S. and Sriboonchitta S.: , Construction of cyclic DNA codes over the ring $\mathbb{Z}_4[u]/<u^2 - 1>$ based on the deletion distance. Theoret. Comput. Sci., 773 (2019), 27-42.
% \bibitem{dinh} H. Q. Dinh, A. K. Singh, N. Kumar, S. Sriboonchitta, On Constacyclic Codes Over  $\mathbb{Z}_4[v]/<v^2-v>$ and Their Gray Images. IEEE Commun. Lett. \textbf{22}(9), 1758-1761 (2018)


%\bibitem{MKGupta1} D. Limbachiya, M. K. Gupta and V. Aggarwal, 10 Years of Natural Data Storage, IEEE Trans. Mol.  \textbf{8}(4) (2022), 263-275, doi: 10.1109/TMBMC.2022.3211446.



%\bibitem{RK1} O. Prakash, R. K. Verma, A. Singh, Quantum and LCD codes from skew constacyclic codes over a finite non-chain ring, Quantum Inf. Process. \textbf{22} (2023), 200.
%https://doi.org/10.1007/s11128-023-03951-0


%\bibitem{om2} O. Prakash, S. Yadav, P. Sharma, Reversible cyclic codes over a class of chain rings and their application to DNA codes, Int. J. Inf. Coding Theory \textbf{6}(1) (2022), 52-70.




	
%\bibitem{RK2} R. K. Verma, O. Prakash, A. Singh, H. Islam, New quantum codes from skew constacyclic codes, Adv. Math. Commun. \textbf{17}(4) (2023), 900-919.




\end{thebibliography}
\end{document}